\newcommand{\iso}{\text{unique}}
\newcommand{\res}{\text{residual}}
\newcommand{\alt}{\text{alt}}
\newcommand{\inp}{\text{input}}
\newcommand{\cata}{\text{catalytic}}
\newcommand{\emphdef}[1]{{\sf {#1}}}
\newcommand\restr[2]{{
  \left.\kern-\nulldelimiterspace 
  #1 
  \vphantom{\big|} 
  \right|_{#2} 
  }}
\newcommand{\poly}{\mathrm{poly}}
\newtheorem{theorem}{Theorem}[section]
\newtheorem{corollary}[theorem]{Corollary}
\newtheorem{lemma}[theorem]{Lemma}
\newtheorem{claim}[theorem]{Claim}
\newtheorem{fact}[theorem]{Fact}
\newtheorem*{theorem*}{Theorem}
\newtheorem*{corollary*}{Corollary}
\newtheorem*{conjecture*}{Conjecture}
\newtheorem*{lemma*}{Lemma}
\newtheorem*{thm*}{Theorem}
\newtheorem*{prop*}{Proposition}
\newtheorem*{obs*}{Observation}
\newtheorem*{remark*}{Remark}
\newtheorem*{rec*}{Recommendation}
\newtheorem{definition}[theorem]{Definition}
\newtheorem{remark}[theorem]{Remark}
\newtheorem*{definition*}{Definition}
\newenvironment{fminipage}%
  {\begin{Sbox}\begin{minipage}}%
  {\end{minipage}\end{Sbox}\fbox{\TheSbox}}
\DeclareMathOperator{\polylog}{polylog}
\newcommand{\newclass}[2]{\newcommand{#1}{{\text{\upshape\sffamily #2}}\xspace}}
\renewcommand{\P}{{\text{\upshape\sffamily P}}\xspace}
\newclass{\NP}{NP}
\newclass{\ZPP}{ZPP}
\newclass{\coNP}{coNP}
\newclass{\BPP}{BPP}
\newclass{\Logspace}{L}
\newclass{\NL}{NL}
\newclass{\coNL}{coNL}
\newclass{\UL}{UL}
\newclass{\coUL}{coUL}
\newclass{\BPL}{BPL}
\newclass{\PL}{PL}
\newclass{\prBPL}{prBPL}
\newclass{\PSPACE}{PSPACE}
\newclass{\EXP}{EXP}
\newclass{\EXPSPACE}{EXPSPACE}
\newclass{\TIME}{TIME}
\newclass{\SPACE}{SPACE}
\newclass{\NSPACE}{NSPACE}
\newclass{\SC}{SC}
\newclass{\coNSPACE}{coNSPACE}
\newclass{\BPSPACE}{BPSPACE}
\newclass{\TFNP}{TFNP}
\newclass{\NC}{NC}
\newclass{\NCo}{NC$^1$}
\newclass{\ACz}{AC$^0$}
\newclass{\ACo}{AC$^1$}
\newclass{\SACo}{SAC$^1$}
\newclass{\TC}{TC}
\newclass{\TCz}{TC$^0$}
\newclass{\TCo}{TC$^1$}
\newclass{\NCt}{NC$^2$}
\newclass{\RNC}{RNC}
\newclass{\PSDNC}{pseudo-deterministic NC}
\newclass{\NUSPL}{non-uniform SPL}
\newclass{\RNCt}{RNC$^2$}
\newclass{\RNCtt}{RNC$^3$}
\newclass{\RNCo}{RNC$^1$}
\newclass{\QNC}{Quasi-NC}
\newclass{\VP}{VP}
\newclass{\CL}{CL}
\newclass{\CLP}{CLP}
\newclass{\CSPACE}{CSPACE}
\newcommand{\nearSC}{\textsf{TIME-SPACE(}\poly(n), n^{1 - \Omega(1)}\textsf{)}}
\newclass{\GapL}{GapL}
\newclass{\MATCH}{MATCH}
\newclass{\DET}{DET}
\newclass{\LOSSY}{LOSSY}
\newclass{\LOSSYNC}{LOSSY[NC]}
\newclass{\LOSSYC}{LOSSY[$\mathcal{C}$]}
\newcommand{\Cclass}{\mathcal{C}}
\newclass{\ZPC}{ZP-$\mathcal{C}$}
\newclass{\ZPNC}{ZPNC}
\newclass{\Comp}{Comp}
\newclass{\Decomp}{Decomp}
\newif\ifblind
\title{Bipartite Matching is in Catalytic Logspace}
\author{}
\author{Aryan Agarwala \\ Max-Planck-Institut f\"{u}r Informatik \\ \texttt{aryan@agarwalas.in} \and Ian Mertz\thanks{Partially supported by grant 24-10306S of GA \v{C}R and supported by Center for Foundations of Contemporary Computer Science
(Charles Univ. project UNCE 24/SCI/008).} \\ Charles University \\ \texttt{iwmertz@iuuk.mff.cuni.cz}}
\date{}
\begin{document}

\setlength{\abovedisplayskip}{5pt}
\setlength{\belowdisplayskip}{5pt}

\pagenumbering{gobble}

\maketitle

\begin{abstract}

\noindent
Matching is a central problem in theoretical computer science,
with a large body of work spanning the last five decades.
However, understanding matching in the time-space bounded setting
remains a longstanding open question, even in the presence of additional
resources such as randomness or non-determinism.\\

\noindent
In this work we study space-bounded machines with access to catalytic space,
which is additional working memory that is full with arbitrary data that
must be preserved at the end of its computation. Despite this heavy restriction,
many recent works have shown the power of catalytic space,
its utility in designing classical space-bounded algorithms,
and surprising connections between catalytic computation and derandomization.\\

\noindent
Our main result is that bipartite maximum matching ($\MATCH$)
can be computed in catalytic logspace ($\CL$) with a polynomial time bound ($\CLP$). Moreover, we show that $\MATCH$ can be reduced to the lossy coding problem for
$\NC$ circuits ($\LOSSYNC$).
This has consequences for matching, catalytic space, and derandomization:

\begin{itemize}
   \item \textbf{Matching}: this is the \textit{first} well studied subclass of $\P$ which is known to compute $\MATCH$, as well as the \textit{first}
   algorithm simultaneously using sublinear free space and polynomial time with
   \textit{any} additional resources. Thus, it gives a potential path
   to designing stronger space and time-space bounded algorithms.
   \item \textbf{Catalytic space}: this is the \textit{first} new problem
   shown to be in $\CL$ since the model was defined, and one which is
   \textit{extremely} central and well-studied. Furthermore, it implies a strong barrier
   to showing $\CL$ lies \textit{anywhere} in the $\NC$ hierarchy, and suggests
   to the contrary that $\CL$ is even more powerful than previously believed.
   \item \textbf{Derandomization}: we give the \textit{first} class $\Cclass$
   beyond $\Logspace$ for which we exhibit a natural problem in $\LOSSYC$ which
   is not known to be in $\Cclass$, as well as a \textit{full derandomization
   of the isolation lemma} in $\CL$ in the context of $\MATCH$. This
   also suggests a possible approach to derandomizing the famed $\RNC$ algorithm
   for $\MATCH$.
\end{itemize}
Our proof combines a number of strengthened ideas from isolation-based algorithms
for matching alongside the compress-or-random framework in catalytic computation.
\end{abstract}


\pagebreak

\pagenumbering{arabic}

\section{Introduction}
\label{sec:intro}

In this work we study a number of key questions and models of complexity
between logarithmic space ($\Logspace$) and polynomial time ($\P$).
In particular we focus on the relationship between
\textit{bipartite maximum matching} ($\MATCH$) and
\textit{poly-time bounded catalytic logspace} ($\CLP$),
as well as their implications for efficient parallel algorithms ($\NC$)
and efficient time-space algorithms ($\nearSC$).
Lastly we draw on connections between both $\MATCH$ and $\CLP$ to problems
in derandomization---for the former we focus on the \textit{isolation lemma},
and with regards to the latter we discuss reductions to the \textit{lossy coding
problem}---to make progress therein.

\subsection{Matching and catalytic computation}

\paragraph{Matching.}
In the $\MATCH$ problem, we are given a bipartite graph $G$ as input
and our goal is to return a subset of edges of maximum size such that no two edges
share an endpoint. $\MATCH$ has been a central problem in the study of
complexity since its inception, and was one of the earliest problems
to be studied with respect to time; it has been known for 70 years
that $\MATCH$ can be solved in $\P$ \cite{Kuhn55,HopcroftKarp}.
However, we are not aware of any well-studied class\footnote{Throughout this paper,
when we discuss classes containing $\MATCH$ we ignore granular poly-time
classes which immediately follow as a direct consequence of the above algorithms,
e.g. $\TIME[n^2]$, or those that contain matching by definition, i.e. the
class of problems reducible to $\MATCH$.}
$\Cclass \subseteq \P$ such that $\MATCH \in \Cclass$.
\begin{center}\label{question:match_in_subp}
    \textbf{Question 1:} Is $\MATCH$ in any subclass of $\P$?
\end{center}

\noindent
For two such classes $\Cclass$ in particular,
namely $\NC$ and $\SC$, proving this would be a major breakthrough~\cite{Lovasz79, KarpUpfalWigderson85, MulmuleyVaziraniVazirani87, AllenderReinhardtZhou99, MahajanVaradarajan00, DattaKulkarniRoy10, FennerGurjarThierauf16, SvenssonTarnawski17, AnariVazirani19, GoldwasserGrossman17, MillerNaor89, AnariVazirani20, Barnes-etal}.
With regards to parallel complexity, a long line of work culminated
in showing that $\MATCH$ can be solved in randomized
$\NC$ ($\RNC$)~\cite{MulmuleyVaziraniVazirani87} and in quasi-polynomial size
$\NC$ ($\QNC$)~\cite{FennerGurjarThierauf16}, but despite decades of research
no true $\NC$ algorithms are known to this day.

\begin{center}\label{question:match_in_nc}
    \textbf{Question 2:} Is $\MATCH \in \NC$?
\end{center}

\noindent
Matching holds an even more central place in the study of space
and time-space efficiency.
It is widely conjectured that logarithmic space is \textit{insufficient} to solve $\MATCH$
(i.e. $\MATCH \notin \Logspace$), and proving so would give a breakthrough separation
between $\Logspace$ and $\P$.
Similarly, with regards to time-space complexity we have no
algorithms which compute $\MATCH$ in polynomial time and sublinear space
even given additional resources such as non-determinism or randomness,
and it is unclear whether such algorithms should exist or not. 

\begin{center}\label{question:match_in_sc}
    \textbf{Question 3:} Do there exist any resources $\mathcal{B}$ such that

    \noindent
    $\MATCH \subseteq \mathcal{B}\nearSC$?
\end{center}

\noindent
In this work we study another such class called \textit{catalytic logspace} ($\CL$),
and in particular its poly-time variant $\CLP$, which is of great interest
in relation to both $\NC$ and $\SC$. In doing so we give a first-ever
solution to Questions 1 and 3, as well as a potential
barrier---or approach---to resolving Question 2.


\paragraph{Catalytic computing.}
In catalytic computing, a space-bounded machine is given additional access to a much longer ``catalytic'' tape, which is additional memory already full of arbitrary data whose contents must be preserved by the computation. $\CL$ is the class of problems solvable by a logspace machine augmented with a polynomial length catalytic tape, and $\CLP$ is the subclass of $\CL$ where the machine is additionally required to run in polynomial time.\\

\noindent
The framework of catalytic computation was formally introduced by Buhrman et al. \cite{BuhrmanCleveKouckyLoffSpeelman14} in order to understand the power of additional but used memory. It was informally conjectured earlier, in the context of the \textit{tree evaluation problem}~\cite{CookMckenzieWehrBravermanSanthanam12}, that used memory could not grant additional power to space bounded machines. However, \cite{BuhrmanCleveKouckyLoffSpeelman14} showed that $\CL$ and even $\CLP$ contain problems believed to not be in $\Logspace$:
$$\Logspace \subseteq \NL \subseteq \TCo \subseteq \CLP \subseteq \CL \subseteq \ZPP$$

\noindent
In the same work, they state that it is unclear what their result implies about the strength of $\CL$. Particularly, \textit{is $\Logspace \subsetneq \CL$, or is the intuition that catalytic space does not grant additional power indeed correct, thus giving an approach to proving $\Logspace = \NL = \TCo = \CL$?}

\begin{center}\label{question:cl_vs_l}
    \textbf{Question 4:} Where does $\CL$ lie between $\Logspace$ and $\ZPP$?
\end{center}


\noindent
Since their result, many works have studied the utility and power of
catalytic computation (see e.g.~\cite{BuhrmanKouckyLoffSpeelman18,GuptaJainSharmaTewari19,DattaGuptaJainSharmaTewari20,BisoyiDineshSarma22,CookMertz22,Pyne24,CookLiMertzPyne25,GuptaJainSharmaTewari24,FolkertsmaMertzSpeelmanTupker25,PyneSheffieldWang25,KouckyMertzPyneSami25}).
These results and techniques have also seen applications for ordinary
space-bounded computation, such as 1) work on the \textit{Tree Evaluation Problem}
by Cook and Mertz~\cite{CookMertz20, CookMertz21, CookMertz24} as well as a
subsequent breakthrough by Williams~\cite{Williams25} for simulating time in low space; and
2) win-win arguments for derandomization and other questions in 
logspace~\cite{DoronPyneTell24, LiPyneTell24, Pyne24} such as a recent result
of Doron et al. \cite{DoronPyneTellWilliams25} showing that, in an instance-wise
fashion, either $\NL \subseteq \SC$ or $\BPL \subseteq \NL$. We refer
the interested reader to surveys of Kouck\'{y}~\cite{Koucky16} and Mertz~\cite{Mertz23}
for more discussion. \\


\noindent
However, despite all of the aforementioned work, \textit{no problems outside of $\TCo$ have
been shown to be in $\CL$}\footnote{Li, Pyne, and Tell~\cite{LiPyneTell24} give a
\textit{search} problem in $\CL$ which is not known to be in $\TCo$; however,
unlike with $\MATCH$, the corresponding decision problem is in $\TCo$ and thus
is not applicable for studying the power of $\CL$ with regards to decision problems.}.
Thus, it is still possible that $\CL \subseteq \TCo$, which has led to conjectures,
such as that of \cite{Koucky16,Mertz23}, that $\CL$ can indeed be computed
\textit{somewhere} in the $\NC$ hierarchy.

\begin{center}\label{question:cl_vs_nc}
    \textbf{Question 5:} Is $\CL \subseteq \NC$?
\end{center}

\paragraph{Our results (1).}
In this work we address and connect all of the aforementioned questions by
showing that catalytic logspace can compute bipartite matching in polynomial time:

\begin{theorem}
\label{thm:main}
    $$\MATCH \in \CLP$$
\end{theorem}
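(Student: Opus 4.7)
The plan is to start from the $\RNC$ algorithm of Mulmuley--Vazirani--Vazirani for $\MATCH$: assign random integer weights of polynomial magnitude to the edges, so that with probability at least $1/2$ the minimum-weight perfect matching is uniquely isolated, and then extract it from the determinant of the weighted Tutte/Edmonds matrix. Since the matching-extraction step reduces to matrix computations well within the reach of $\CLP$, the only real barrier to placing $\MATCH$ in $\CLP$ is obtaining isolating weights deterministically using only the catalytic tape.

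First I would invoke the compress-or-random paradigm of catalytic computing: parse the catalytic tape into polynomially many blocks and treat each block as a candidate weight assignment. For each candidate, run the isolation-plus-determinant procedure and verify (by another matching check) whether the returned edge set is actually a perfect matching. If any candidate yields a valid matching, output it; the tape is easy to restore because the whole procedure performs only reversible algebraic computation over the catalytic storage, in line with standard catalytic programming techniques.

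The core difficulty is the case where every block of the catalytic tape fails to isolate. This is precisely the regime in which the compress-or-random framework is supposed to bite: the set of non-isolating weight strings has density at most $1/2$ and membership in it is decidable by an $\NC$ circuit (essentially running the $\RNC$ algorithm and checking uniqueness), so the tape lies inside a small, $\NC$-recognizable set. This gives the reduction to $\LOSSYNC$ advertised in the abstract: compressing the offending portion of the tape with respect to that $\NC$ circuit yields genuine free workspace, which I would use to carry out a more refined deterministic search, in the spirit of strengthened isolation constructions \textit{\`a la} Fenner--Gurjar--Thierauf, while still restoring the original tape contents at the end.

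The main obstacle I anticipate is technical: executing the compression and the refined search simultaneously within $O(\log n)$ work tape and polynomial total time, while guaranteeing that the catalytic tape is restored \emph{exactly} on every computation path. One must interleave the algebraic bookkeeping required for isolation-based matching with the combinatorial compression used to exploit failed trials, and argue via a pigeonhole that across polynomially many trials we must either succeed at isolation on some block or accumulate enough compression to bypass isolation entirely. Making this interleaving fully uniform, logspace-implementable, and polynomially time-bounded---and in particular ensuring that the compression routine itself never disturbs bits of the tape it is not currently accounting for---is where I expect the bulk of the technical work to live.
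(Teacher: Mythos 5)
Your high-level plan — run the Mulmuley--Vazirani--Vazirani isolation algorithm with weights drawn from the catalytic tape, and invoke compress-or-random when isolation fails — matches the paper's strategy. But the proposal has a genuine gap at the step that carries essentially all of the paper's technical weight: you assert that because the set of non-isolating weight strings has density at most $1/2$ and is $\NC$-recognizable, ``compressing the offending portion of the tape with respect to that $\NC$ circuit yields genuine free workspace.'' That inference does not go through. Knowing a string lies in a set of density $1/2$ that some circuit recognizes gives you no efficient, pointwise-invertible encoding of the string in fewer bits; compress-or-random needs a concrete $\Comp/\Decomp$ pair you can actually run in $\CLP$ and undo exactly, not merely a small recognizable set. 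The whole paper is about constructing such a pair.

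What the paper actually does is quite different from ``try many blocks of the tape as candidate weights and compress the failures.'' It works inductively on matching sizes $k$ with a \emph{single} weight assignment $W$ read off the tape. When $W$ isolates matchings up to size $k$ but not size $k+1$, the paper builds the residual graph of the isolated size-$k$ matching, finds a \emph{threshold edge} $e$ (an edge lying on some but not all minimum-weight augmenting paths), and proves that $W(e)$ is redundant: it equals $\delta_{s,t}' - (\delta_{s,u} + \delta_{v,t})$, computable from the remaining weights via weighted $s$--$t$ reachability. This is the explicit, $\CLP$-computable and $\CLP$-invertible compression the framework demands; it frees $\Theta(\log n)$ bits each round. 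After polynomially many rounds you either isolate the maximum matching or accumulate enough free space to run Hopcroft--Karp directly — there is no appeal to Fenner--Gurjar--Thierauf-style refined isolation (which needs quasi-polynomial weights and would not fit here). Supporting machinery you also omit: the reduction from size-$k$ isolation to perfect-matching isolation via padded bipartite cliques, and the symmetric-difference argument (Lemma~\ref{lem:sym-dif}) that the residual graph has no non-positive cycles, which is what makes the shortest-path reconstruction well-defined. Without the threshold-edge reconstruction your proposal has no way to realize the compression it relies on, so the argument as stated does not establish $\MATCH \in \CLP$.
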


\noindent
We make a few notes on the power of Theorem~\ref{thm:main}:

\begin{enumerate}[leftmargin=*]
    \item This is the first subclass of $\P$ which has been shown to contain $\MATCH$ (\textbf{Question 1}).
    
    \item This is the first $\mathcal{B}\nearSC$ algorithm for $\MATCH$ for any additional resources $\mathcal{B}$ (\textbf{Question 3}). We believe that this gives hope for showing that $\MATCH$ can be computed in $\nearSC$, or perhaps, as was the case
    with Tree Evaluation, such catalytic algorithms are a reason to doubt our intuition
    that $\MATCH$ cannot be solved in $o(\log^2 n)$ space altogether.

    \item This is the first problem outside $\TCo$ shown to be in $\CL$, and thus the first strengthening of $\CL$ since the original work of Buhrman et al.~\cite{BuhrmanCleveKouckyLoffSpeelman14};
    this also gives the strongest evidence thus far that $\Logspace \neq \CL$
    (\textbf{Question 4}).
    
    \item This ties together \textbf{Question 2 \& 5}, as showing
    $\CL$ or even $\CLP$ is contained in $\NC$ would immediately give the
    breakthrough result $\MATCH \in \NC$.
    Conversely, if one believes that $\MATCH \notin \NC$, then this gives
    evidence to contradict the conjectures of \cite{BuhrmanCleveKouckyLoffSpeelman14,Koucky16,Mertz23} that $\CL \subseteq \NC$.

    \item Many previous works have shown reductions from other graph problems
    to $\MATCH$, and thus our $\CLP$ inclusion extends to these problems
    as well. We discuss the cases of \textit{weighted matching}, $s$-$t$ \textit{max-flow},
    and \textit{global min-cut} in Section~\ref{sec:others}, although we stress this list
    is only a partial sample of such extensions of Theorem~\ref{thm:main}.
\end{enumerate}

\begin{figure}[t]
\label{fig-cl}
    \centering
    \includegraphics[width=1\linewidth]{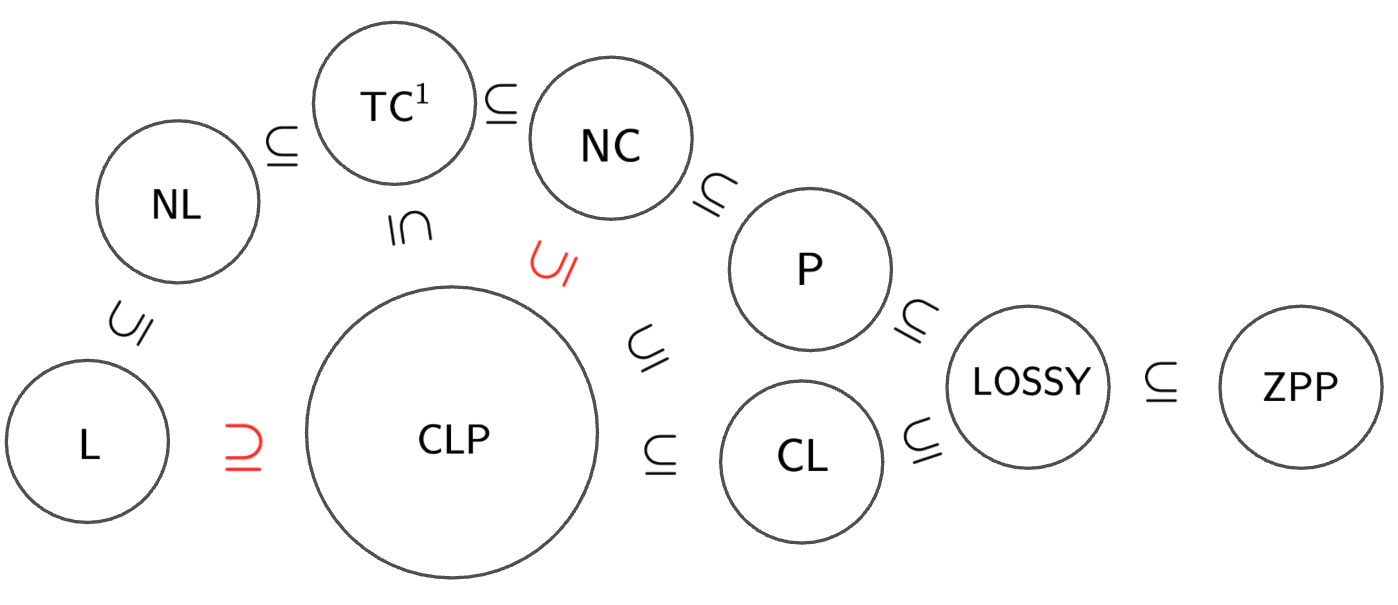}
    \caption{Barriers: black inclusions have been proven before this work, and red inclusions were conjectured in \cite{BuhrmanCleveKouckyLoffSpeelman14, Koucky16, Mertz23}; our inclusion of $\MATCH$ in $\CLP$ acts as a barrier to all of those conjectures.}
\end{figure}


\subsection{Derandomization}
\paragraph{Lossy coding.}
One other important, and useful, aspect of catalytic computation is a
connection to fundamental questions in derandomization,
a connection exploited in recent line of work~\cite{DoronPyneTell24, LiPyneTell24, Pyne24, DoronPyneTellWilliams25}
for showing novel non-catalytic space-bounded algorithms.
The \emphdef{lossy coding} problem~\cite{Korten22} is defined as follows: given a pair of
circuits $\Comp:\{0, 1\}^n \rightarrow \{0, 1\}^{n-1}$ and
$\Decomp : \{0, 1\}^{n-1} \rightarrow \{0, 1\}^n$, output
$x \in \{0, 1\}^n$ such that $\Decomp(\Comp(x)) \neq x$.
For a closed complexity class $\Cclass$, $\LOSSYC$ is the class of problems
$\Cclass$-reducible to lossy coding when $\Comp$ and $\Decomp$ are required to
be computed and evaluated in the class $\Cclass$. It is easy to see that 
$$\Cclass \subseteq \LOSSYC \subseteq \ZPC $$
\noindent
$\textsf{LOSSY[P]}$, which is often just referred to as $\LOSSY$,
has seen attention in recent years
in the context of total function $\NP$~\cite{Korten22}, range avoidance~\cite{KortenPitassi24},
and meta-complexity~\cite{ChenLuOliveiraRenSanthanam23}, and was discussed at
length in a recent survey by Korten~\cite{Korten25} on the topic.
The problem of derandomizing $\LOSSY$ is seen as a stepping stone towards
the major derandomization goal of $\P = \ZPP$.\\

\noindent
Unfortunately, for most classes $\Cclass$ it remains unclear whether
$\LOSSYC$ contains \textit{any natural and well studied problem} outside
$\Cclass$.
This was posed as an open problem in \cite{Korten25}.

\begin{center}\label{question:c-vs-lossyc}
    \textbf{Question 6:} Does $\LOSSY[\Cclass]$ admit a natural problem not known to be in $\Cclass$ for any class $\Cclass$?
\end{center}

\noindent
The only known progress on this question comes via catalytic computation,
as Pyne~\cite{Pyne24} showed that $\BPL \subseteq \textsf{LOSSY[L]}$. However, this remains unknown for all larger classes. Moreover, due to the fact that space-bounded randomized classes use read-once randomness, \textsf{LOSSY[L]} is perhaps an overkill for derandomizing $\BPL$.

\paragraph{Isolation lemma.}
One of the greatest contributions of the study of $\MATCH$ to complexity theory is
a key tool, known as the \textit{isolation lemma}, which is the backbone of
all parallel algorithms for $\MATCH$ algorithms since its introduction by
Mulmuley, Vazirani, and Vazirani~\cite{MulmuleyVaziraniVazirani87}. It
has since turned out to be a very strong tool used in the design of
randomized algorithms for a wide range of problems ~\cite{OrlinStein93,LingasPersson15,NarayananSaranVazirani94,GurjarThierauf17,KlivansSpielman01,AllenderReinhardt00,BourkeTewariVinodchandran09,KalampallyTewari16,MelkebeekPrakriya19, ArvindMukhopadhyay08} \footnote{For a more comprehensive list of
applications of the isolation lemma we direct the readers to \cite{AgarwalGurjarThierauf20}.}.
Thus, derandomizing the isolation lemma, independent of a concrete problem,
has become an important problem in itself~\cite{ChariRohatiSrinivasan93,ArvindMukhopadhyay08,AgarwalGurjarThierauf20,AnariVazirani20,GurjarThieraufVishnoi21}.

\begin{center}\label{question:derand-iso}
    \textbf{Question 7:} Which deterministic classes $\Cclass$ can implement
    the isolation lemma?
\end{center}

\noindent
While the breakthrough $\QNC$ algorithm of Fenner, Gurjar, and
Thierauf~\cite{FennerGurjarThierauf16} does exactly this, it both lies
outside $\P$ and uses weights which are much larger than in all other applications.
In fact this question is unresolved even for the ``derandomization'' class
for $\NC$ discussed above, namely its $\LOSSY$ variant.

\begin{center}\label{question:iso-lossy}
    \textbf{Question 8:} Is the isolation lemma in $\LOSSYNC$?
\end{center}

\paragraph{Our results (2).}
An analysis of our algorithm in Theorem~\ref{thm:main} gives answers to all
of the above questions:

\begin{theorem} \label{thm:lossy}
    $$\MATCH \in \LOSSYNC$$
\end{theorem}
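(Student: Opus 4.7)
The plan is to extract a $\LOSSYNC$ reduction directly from the $\CLP$ algorithm of Theorem~\ref{thm:main}. That algorithm follows the \emph{compress-or-random} template: it interprets the catalytic tape as (the seed of) an edge weighting $w$ for the input bipartite graph $G$ and branches on whether $w$ isolates the minimum-weight perfect matching. If $w$ is isolating, the algorithm reconstructs the matching via the determinant-based recipe of Mulmuley--Vazirani--Vazirani; otherwise, it reversibly rewrites the tape into a strictly shorter string together with a recovery marker. The first step of my plan is to verify that both branches are not merely logspace but in fact $\NC$: isolation testing reduces to a parallel determinant/rank computation over a small prime field, matching reconstruction from isolating weights is the standard $\NC$ MVV procedure, and---by inspection of the algorithm underlying Theorem~\ref{thm:main}---the compression step performs only a small number of rounds of linear algebra modulo a prime, all parallelizable in $\NC$.

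Given this, I would package the two branches as a single $\NC$-reduction to lossy coding. From $G$, build $\NC$ circuits $\Comp : \{0,1\}^n \to \{0,1\}^{n-1}$ and $\Decomp : \{0,1\}^{n-1} \to \{0,1\}^n$ so that $\Decomp(\Comp(x)) = x$ for every non-isolating $x$, while the equality is forced to fail on each isolating $x$. Concretely, $\Comp$ runs the catalytic compression on non-isolating inputs and outputs a canonical sentinel encoding on isolating inputs; $\Decomp$ inverts. Injectivity on the non-isolating slice is guaranteed by reversibility of the compression, and the image budget $2^{n-1}$ suffices because the isolation lemma guarantees that at least half of all weight assignments are isolating. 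Any $x$ returned as a solution to the lossy coding instance is therefore isolating, from which the maximum matching of $G$ can be read off by one further $\NC$ call, completing the reduction.

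The main obstacle, where I expect the bulk of the work to lie, is the claim that the catalytic compression step is realizable by an $\NC$ circuit (not just by a poly-time logspace machine) and is bijective on its domain of non-isolating weights. This requires opening the proof of Theorem~\ref{thm:main} and exhibiting, for each compression move---built on the vanishing of the weighted bi-adjacency determinant, or a closely related algebraic witness---both an explicit $\NC$ encoder and an explicit $\NC$ decoder. Any step in the $\CLP$ algorithm that secretly relies on sequential logspace traversal of the tape would need to be replaced by a genuinely parallel algebraic witness before the reduction to $\LOSSYNC$ goes through; if instead such a step turned out to be inherently sequential, one would recover only the weaker statement of a logspace reduction to lossy coding with $\NC$-evaluable circuits, which would still be a meaningful partial answer to Questions~6 and~8.
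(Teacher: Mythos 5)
Your high-level plan matches the paper's proof: interpret the lossy-coding input $x$ as an edge-weight function, have $\Comp$ run the compression step of the $\CLP$ algorithm (finding a threshold edge and overwriting its weight with a pointer), have $\Decomp$ invert via the reconstruction of Lemma~\ref{lem:decompression-threshold-edge}, and observe that any lossy-coding witness must be a fully isolating weight from which MVV extracts the matching. However, several of your justifications are off, and the gap you flag as ``the main obstacle'' is one the paper explicitly resolves rather than leaves open.

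Concretely: (1) Your appeal to ``the isolation lemma guarantees that at least half of all weight assignments are isolating'' and to an ``image budget of $2^{n-1}$'' is not needed and not how the argument goes. The lossy-coding instance is total by pigeonhole alone; the only structural fact needed is that $\Comp/\Decomp$ roundtrip correctly on \emph{all} non-isolating weights, so any failure witness is isolating. No density estimate enters. (2) You frame isolation around ``the minimum-weight perfect matching,'' but the algorithm needs isolation at \emph{every} size $k$ up to the maximum; the crucial design choice in the paper's $\Comp$ is to run all $k \in [0,n]$ in parallel and locate the \emph{minimum} $k^*$ where isolation first fails, which is what replaces the sequential $k$-loop of the $\CLP$ algorithm by an $\NC$ circuit. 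This is also where the bit accounting lives: the erased weight $W(e^*)$ costs $3\log n + 1$ bits while the replacement $(k^*, e^*)$ costs $3\log n$ bits, yielding the one-bit shrink that lossy coding requires. (3) The compression step is not ``rounds of linear algebra modulo a prime''; it is shortest-path and reachability computations in the residual graph $G_{\res}$ (Lemmas~\ref{lem:compute-min-weight-path}, \ref{lem:compute-threshold-edge}, \ref{lem:decompression-threshold-edge}), which live in $\NL \subseteq \NC^2$, together with the $\GapL$/determinant subroutine of Lemma~\ref{lem:isolated-km}. So the $\NC$-realizability you treat as a possible failure mode is in fact handled, but by residual-graph reachability and by parallelizing over $k$, not by algebraic rewrites.
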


\noindent
Again some discussion is in order:
\begin{enumerate} [leftmargin=*]
    \item To the best of our knowledge, this is the first well studied and natural
    problem shown to be in $\LOSSYC$ and not known to be in $\Cclass$
    for any class $\Cclass$ larger than $\Logspace$ (\textbf{Question 6}).
    
    \item Since $\NC \subseteq \LOSSYNC \subseteq \ZPNC \subseteq \RNC$, this is a
    direct improvement over $\MATCH \in \RNC$ \cite{Lovasz79, KarpUpfalWigderson85, MulmuleyVaziraniVazirani87} and $\MATCH \in \ZPNC$ \cite{Karloff86} (it is incomparable to all other $\NC$-related results about $\MATCH$). It also motivates the further study of \emphdef{lossy coding}, since any derandomization of $\LOSSYNC$ would make progress towards the ultimate goal of $\MATCH \in \NC$. The partial derandomization of $\BPL$ by Doron et al. \cite{DoronPyneTellWilliams25} using $\textsf{LOSSY[L]}$ gives hope that this approach may prove fruitful.

    \item Our algorithm is built upon a catalytic derandomization of the
    isolation lemma, thus showing that it can be
    implemented in deterministic $\CLP$ (\textbf{Question 7}) as well as
    $\LOSSYNC$ (\textbf{Question 8}).
    We believe that the framework introduced in this work could be used to show similar $\CL$ and $\LOSSY$ results for the many other problems which admit isolation lemma based algorithms - similar to the exciting line of work which followed \cite{FennerGurjarThierauf16} (see e.g. \cite{SvenssonTarnawski17,GurjarThierauf17,GurjarThieraufVishnoi21,GoldwasserGrossman17,AnariVazirani19,AnariVazirani20,KalampallyTewari16,MelkebeekPrakriya19}).
\end{enumerate}

\begin{figure}
    \centering
    \includegraphics[width=1\linewidth]{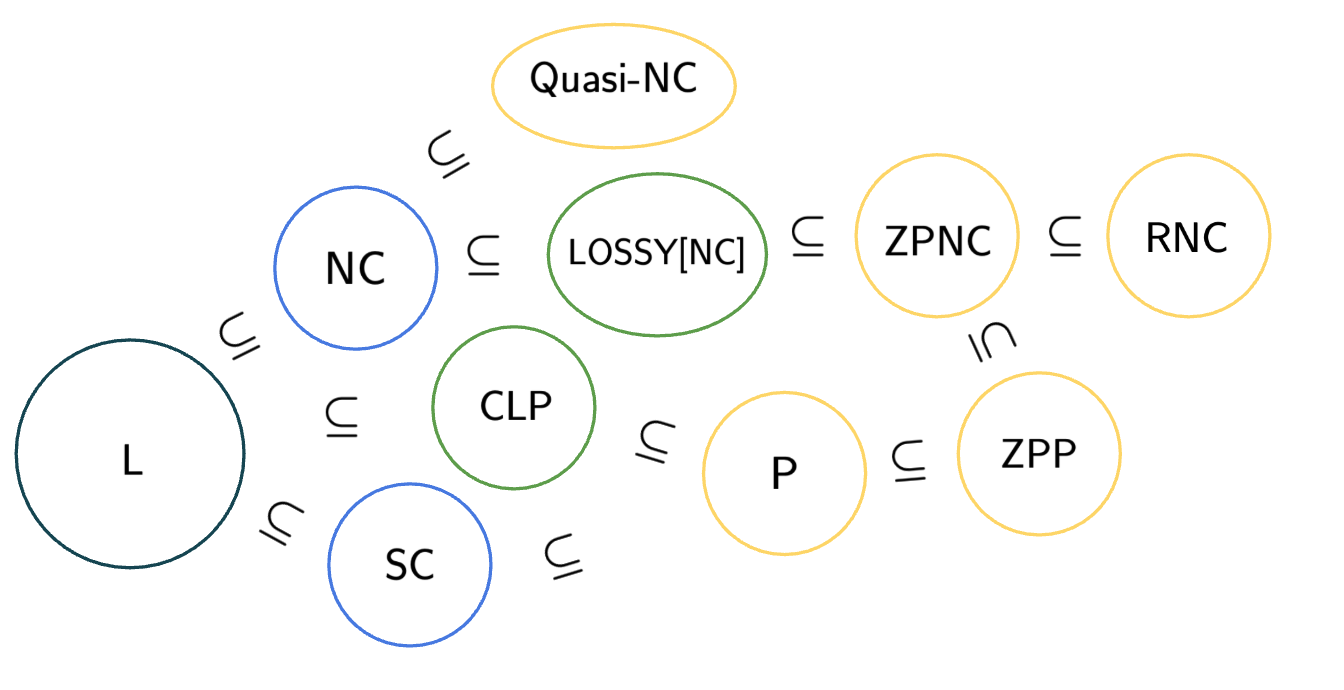}
    \caption{Contributions: yellow bubbles correspond to classes which were already known to contain $\MATCH$, blue bubbles correspond to classes for which the $\MATCH$ inclusion is a long standing open problem, and green bubbles correspond to the inclusions shown in this paper. }
\end{figure}


\subsection{Open problems}
We suggest a number of open problems coming from this work:
\begin{enumerate}
    \item Can $\MATCH$ be computed using less catalytic space, say
    $O(\log^2 n)$, or alternatively can our algorithm be used as a subroutine
    for ordinary space or time/space-bounded computation?

    \item Can $\CL$ prove related functions such as non-bipartite matching,
    matroid intersection, etc., or indeed can the inclusion $\TCo \subseteq \CL$ be
    improved to a stronger class, perhaps even $\NCt$?
    Alternatively, can we use the $\CL$ derandomization
    of the isolation lemma to show novel derandomizations such as
    $\RNCo \subseteq \CL$?
    \item Does $\LOSSYNC$ contain any other related functions such as non-bipartite matching or matroid intersection? Is $\NC = \LOSSYNC$?
\end{enumerate}

\subsection{Proof overview}

We finish the introduction by giving an overview of our argument
along with where in the paper each part will appear.
Our algorithm will utilize the \textit{isolation lemma} based framework of Mulmuley, Vazirani,
and Vazirani~\cite{MulmuleyVaziraniVazirani87} for \MATCH, combined with the
\textit{compress-or-random} framework introduced by Cook et al.~\cite{CookLiMertzPyne25},
with novel insights for both.

\paragraph{Weighting and Isolating Matchings (Section~\ref{sec:find-match}).}
Let $n$ be the number of vertices in the graph.
Assume we have access to some fixed weight function $W$ whose weights are at most
$\poly(n)$, and consider the set $\mathcal{M}$ of all perfect matchings $M$ in $G$.
We say that $W$ is \textit{isolating} on $\mathcal{M}$ if there
exists a single matching $M_{\iso} \in \mathcal{M}$ of minimum weight.
\cite{MulmuleyVaziraniVazirani87} prove that a random $W$ is isolating on
$\mathcal{M}$ with high probability. \\

\noindent
By taking the Edmonds matrix $N$ of $G$,
where $N[i,j] = 2^{w(i,j)}$ for an isolating weight assignment $W$,
it is additionally proven in ~\cite{MulmuleyVaziraniVazirani87}
that the unique min-weight perfect matching $M_{\iso}$
can be derived from the low-order bit of $\DET(N)$, and thus
is in $\CLP$ given access to $W$. We describe modifications that can find a matching of any fixed size $k \in [n]$ as well, again provided that such
a matching is the unique minimum weight matching of size $k$.

\paragraph{Maximum Matching via Isolation (Section~\ref{sec:induction}).}
Let $k$ be a value for which $W$ isolates a matching of size $k$,
and let $M^k_{\iso}$ be the matching in question.
There are three possibilities:  $M^k_{\iso}$ is the largest matching in $G$, there is a unique min-weight matching of size $k+1$ under $W$,
or there are at least two min-weight matchings of size $k+1$ under $W$. To distinguish
these cases, we will build a directed graph $G^k$ based on $G$ and $M^k_{\iso}$ which
contains two additional nodes, $s$ and $t$, such that, from every path $P$ from $s$
to $t$, we can extract a matching $M_P$ of size $k+1$ along with its weight\footnote{For readers familiar with matching theory, $G^k$ is exactly the residual graph of $G$ with respect to $M^k$, and the path $P$ is an augmenting path.}.\\

\noindent
Thus if there are no $s$-$t$ paths in $G^k$ we can output $k$; here
we use the fact that $\NL \subseteq \CLP$ to perform this search. Our main contribution in this section is on the structure of the graph $G^k$, which allows us to test whether a matching of size $k+1$ is isolated by $W$, and more importantly, handle the case where it is not isolated.

\paragraph{Compression of Weight Assignments (Section~\ref{sec:compress}).}
The main observation in the current paper is in identifying and handling the remaining case,
where there exist at least two min-weight matchings of size $k+1$, in $\CLP$.
Notice that for a random $W$ this case will not occur with high probability
due to \cite{MulmuleyVaziraniVazirani87},
but we have not yet specified where $W$ comes from in our deterministic
procedure. In fact, we will have $W$ come from
the (adversarial) catalytic tape itself,
and we will handle this case in the style of \cite{CookLiMertzPyne25}. \\

\noindent
To recap, matchings of size $k+1$ can be found by checking for paths
in $G^k$, and in particular we can find an edge $e$ in $G$
that is in \textit{some} min-weight matchings of size $k+1$ \textit{but not all} of them.
Our key is to notice that the weight
of $e$ is in fact redundant; by constructing $G^k$, and from it $M_1^{k+1}$ and
$M_2^{k+1}$ - which are min-weight matchings of size $k+1$ not containing and containing $e$ respectively, we find that $w(e) = w(M_1^{k+1}) - w(M_2^{k+1} \smallsetminus \{e\})$. Thus since
$W$ is written on the catalytic tape, we
can erase $w(e)$ from the catalytic tape and free up $\Theta(\log n)$ bits.
We will then iterate this procedure, using a new part of the catalytic
tape as a replacement for the erased memory.

\paragraph{Final Algorithm (Section~\ref{sec:final-alg}).}
To sum up, we begin by splitting the catalytic tape into two parts,
one part for computation and another part for specifying
a weight assignment $W$ on the edges $E$ of $G$, where each
weight has some $\Theta(\log n)$ bits, plus $\poly(n)$ reserve weights
which we save for later. \\

\noindent
Starting from $k = 1$ we certify that $W$ isolates a min-weight matching of size $k$,
which allows us to compute this matching $M_{\iso}^k$ in $\CLP$. We then
use $M_{\iso}^k$ to certify this for $k+1$, and if so then we proceed.
If not, then either there are no matchings of size $k+1$, at
which point we return $k$ and halt, or we find a redundant weight
in $W$. \\

\noindent
In the latter case, we erase this section of the catalytic tape,
replace it with a reserve weight that we had set aside, and start the algorithm over.
Repeating this $\poly(n)$ times, we either eventually isolate the largest
min-weight matching in the graph or we free up enough space to compute the
matching by brute force. At the end of the algorithm we recover the
compressed section of the tape one by one to reset the catalytic tape.



\section{Preliminaries}
\label{sec:prelims}

We use notation $\mathbb{Z}_{\leq c}$ to denote the non-negative integers
of value at most $c$. The \emphdef{determinant} function is denoted by
$\DET$.

\subsection{Graphs, matching, and weights}
We denote by $G = (V, E)$ a graph on vertices $V$ and edges $E$.
For $v \in V$, we define
\(E(v) \vcentcolon = \{e \in E \ | \ e \text{ is incident upon }v\}\).


\begin{definition}[Matching]
Let $G = (V, E)$ be an undirected graph.
A \emphdef{matching} is a set $M \subseteq E$ of edges such that
\[\forall v \in V, \ |E(v) \cap M| \leq 1 \]
\noindent
A vertex $v \in V$ is matched by $M$ (and otherwise unmatched) if 
\[|E(v) \cap M| = 1\]
\noindent
The \emphdef{size} of the matching $M$ is $|M|$, and we call a matching
\emphdef{perfect} if every vertex is matched.
\end{definition}

\noindent
In this paper we study matching in bipartite graphs $G = (V = L \sqcup R, E)$,
i.e. where all edges exclusively go between $L$ and $R$.
We focus on the case where $|L| = |R| = |V|/2$, and for convenience we define
$n := |V|/2$ rather than the size or number of nodes of $G$.
    
\begin{definition}
    The \emphdef{matching function}, denoted by $\MATCH$, takes
    as input an $n \times n$ bipartite graph $G$ and returns a matching $M$ in $G$ of maximum size. 
\end{definition}

\noindent
We will also work with different sorts of graphs for our algorithm and analysis.
First, we will sometimes work with directed graphs,
and will make it clear from context whether $G$ is directed or not.
We will also consider weighted graphs, i.e. graphs where
we are given a weight assignment $W:E \rightarrow \mathbb{Z}$;
for any $S \subseteq E$ we extend the definition of $W$ and define
$W(S) := \sum_{s \in S} W(s)$.

\begin{definition}[Symmetric Difference]
    Let $G = (V, E)$ be a graph.
    The \emphdef{symmetric difference} of matchings $M_1, M_2 \subseteq E$,
    denoted by $M_1 \Delta M_2$,
    is defined as having edges $(M_1 \smallsetminus M_2) \cup (M_2 \smallsetminus M_1)$.

    \noindent
    Similarly for any matching $M$ and set $S \subseteq E$, we define
    $M \oplus S$ as $(M \cup S) \smallsetminus (M \cap S)$.
\end{definition}

\noindent
Note that $M \oplus S$ need not be a matching, depending on $S$.

\subsection{Catalytic computation}

Our main computational model in this paper is the catalytic space model:
\begin{definition}[Catalytic machines]
    Let $s := s(n)$ and $c := c(n)$. A \emphdef{catalytic Turing machine} with
    space $s$ and \emphdef{catalytic space} $c$ is a Turing machine $M$ with a
    read-only input tape of length $n$, a write-only output tape, a read-write work tape of length $s$,
    and a second read-write work tape of length $c$ called the \emphdef{catalytic tape},
    which will be initialized to an adversarial string $\tau$. \\

    \noindent
    We say that $M$ computes a function $f$ if for every $x \in \{0,1\}^n$ and
    $\tau \in \{0,1\}^c$, the result of executing $M$ on input $x$ with initial
    catalytic tape $\tau$ fulfils two properties: 1) $M$ halts with $f(x)$ written on the output tape; and
    2) $M$ halts with the catalytic tape in state $\tau$.
\end{definition}

\noindent
Such machines naturally give rise to complexity classes of interest:

\begin{definition}[Catalytic classes]
    We define $\CSPACE[s,c]$ to be the family of functions computable by
    catalytic Turing machines with space $s$ and catalytic space $c$.
    We also define \emphdef{catalytic logspace} as
    $$\CL := \bigcup_{d \in \mathbb{N}} \CSPACE[d \log n, n^d]$$
    Furthermore we define $\CLP$ as
    the family of functions computable by $\CL$ machines that are
    additionally restricted to run in polynomial time for every
    initial catalytic tape $\tau$.
\end{definition}

\noindent
Important to this work will be the fact, due to Buhrman et al.~\cite{BuhrmanCleveKouckyLoffSpeelman14},
that $\CLP$ can simulate log-depth threshold circuits:

\begin{theorem}[\cite{BuhrmanCleveKouckyLoffSpeelman14}]
\label{thm:bckls}
    $$\TCo \subseteq \CLP$$
\end{theorem}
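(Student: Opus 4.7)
The plan is to follow the approach of Buhrman, Cleve, Koucký, Loff, and Speelman, simulating any $\TCo$ circuit by a \emph{transparent} arithmetic computation that uses the catalytic tape as a bank of registers.

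First, I would reduce the $\TCo$ decision problem to evaluating a polynomial-size, polynomial-degree arithmetic circuit $C$ over $\mathbb{Z}$ whose intermediate values are bounded by $2^{\poly\log n}$. This reduction is standard: each threshold gate is computed by summing its inputs and comparing to a threshold, and a $\TCo$ circuit's log-depth structure translates into an arithmetic circuit (or, equivalently, an iterated product of $3\times 3$ integer matrices with polynomially bounded entries) via Ben-Or/Cleve style constructions. The output bit of the $\TCo$ circuit is then recovered from a designated coordinate of $C$.

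The next step is to evaluate $C$ in $\CLP$. I would rewrite $C$ as a straight-line register program whose only instruction type has the \emph{transparent} form $r_i \gets r_i + r_j \cdot r_k$ (with $i \notin \{j,k\}$), possibly combined with constants and input bits. Each such update is reversible by the inverse instruction $r_i \gets r_i - r_j \cdot r_k$. I would lay all registers out on a reserved portion of the catalytic tape (polynomially many registers, each of $\poly\log n$ bits), execute the instructions in forward order to compute the output, read the answer out, and then replay the inverses in reverse order. Because each instruction individually restores its target register, after the reverse pass the catalytic tape is byte-identical to its initial content. The workspace used is only $O(\log n)$ (an instruction pointer plus a few $\poly\log n$-bit arithmetic scratch cells), and the total running time is polynomial in the number of instructions.

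The main obstacle is making the arithmetic fit cleanly into $O(\log n)$ workspace while the values in $C$ may be as large as $2^{\poly\log n}$. I would handle this by carrying out all register arithmetic modulo polynomially many small primes $p_1,\ldots,p_t$ with $\prod_i p_i > 2^{\poly\log n}$, running the transparent simulation independently for each modulus, and combining the residues via CRT to recover the true output; each residue computation fits in $O(\log n)$ ordinary workspace. A second technical point is producing the transparent register program on the fly in logspace; here one uses the uniform structure of the $\TCo$ circuit together with the standard formula-to-straight-line-program transformation, so that the next instruction can always be computed from the current instruction index without materializing the whole program. Putting these pieces together yields $\TCo \subseteq \CLP$.
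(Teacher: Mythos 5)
Your high-level framework — using the catalytic tape as a register bank, building a straight-line program whose instructions are all of the reversible (``transparent'') form $r_i \gets r_i \pm r_j\cdot r_k$, running the program forward and then replaying the inverses in reverse order, and working modulo small integers so that each register is $O(\log n)$ bits — is exactly the BCKLS approach. The CRT remark is also in the right spirit, though BCKLS can get away with a single prime $p=\poly(n)$ (Fermat's little theorem supplies the threshold arithmetization), and one has to be careful that $r\mapsto r+r_jr_k \pmod p$ be a \emph{bijection} on the actual register contents, not just on $\mathbb{Z}_p$; a tape cell holding $\lceil\log p\rceil$ arbitrary bits is not a priori a residue mod $p$, so some extra bookkeeping is needed for restorability.

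The genuine gap is in the step where you pass from a $\TCo$ circuit to a polynomial-length transparent program via ``Ben-Or/Cleve style constructions.'' Ben-Or/Cleve converts a depth-$d$, \emph{bounded fan-in} arithmetic formula into a width-$3$ straight-line program of length $c^d$ for a constant $c$, which is polynomial precisely when $d = O(\log n)$. But a $\TCo$ circuit has unbounded fan-in threshold gates. Arithmetizing a single threshold gate on $n$ inputs over $\mathbb{Z}_p$ already produces a degree-$\Theta(n)$ polynomial and hence a bounded fan-in formula of depth $\Theta(\log n)$; composing $\Theta(\log n)$ such layers yields bounded fan-in depth $\Theta(\log^2 n)$, and $c^{\Theta(\log^2 n)}$ is quasipolynomial, not polynomial. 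Related to this, the parenthetical claim that the $\TCo$ circuit is ``equivalently, an iterated product of $3\times 3$ integer matrices with polynomially bounded entries'' is false as stated: iterated products of constant-size matrices characterize $\NC^1$ / $\GapNC^1$, and if $\TCo$ reduced to them we would have $\TCo \subseteq \NC^1$, a major open problem. What actually makes the BCKLS proof work is a sharper structural lemma specific to the transparent model — roughly, that iterated addition and iterated multiplication of polynomially many register contents modulo $p$ admit transparent programs whose length grows \emph{polynomially} (not as $c^{\text{depth}}$), after which the $O(\log n)$ \emph{threshold-gate} layers (not the inflated bounded fan-in depth) are composed with a pebbling/uncomputing schedule. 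Your write-up does not supply this lemma, and without it the length bound does not close; so as it stands the proof does not establish polynomial running time. Also, the claimed intermediate value bound $2^{\polylog n}$ over $\mathbb{Z}$ is not right for polynomial-degree circuits on $n$-bit inputs (values can be $2^{\poly(n)}$), though this particular slip is neutralized once one commits to working over a small modulus from the outset.
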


\noindent
This gives a number of problems, such as determinant and
s-t connectivity, in $\CLP$. We will mention a few of
these directly, as they will be necessary later.
First, determinant over matrices with polynomially many
bits is in $\GapL$ (see c.f.~\cite{MahajanVinay97})
and thus in $\CLP$:
\begin{lemma}
\label{lem:DET-in-cl}
    Let $N$ be an $n \times n$ matrix over a field of size $\exp(n)$.
    Then $\DET(N)$ is computable in $\CLP$.
\end{lemma}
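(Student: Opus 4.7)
The plan is to combine the preceding theorem, $\TC^0 \subseteq \CLP$, with the Mahajan--Vinay characterization of the determinant as a $\GapL$-computable function, then bridge the two via the fact that over a field of size $\exp(n)$ all arithmetic operates on $O(n)$-bit integers, which is the polynomial-bit regime where threshold circuits are powerful enough.

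First, I would invoke Mahajan--Vinay~\cite{MahajanVinay97} to write
\[
\DET(N) \;=\; \sum_{C} \mathrm{sgn}(C) \prod_{(i,j) \in C} N[i,j],
\]
where $C$ ranges over clow sequences on $[n]$. Equivalently, $\DET(N)$ is a signed weighted count of $s$--$t$ paths in a polynomial-size directed acyclic graph whose edge weights are entries of $N$; this is exactly a $\GapL$ computation on $N$.

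Next, I would turn this $\GapL$ description into a uniform $\TC^0$ circuit for $\DET(N)$. Each entry of $N$ has $O(n)$ bits, each product of entries along a clow has $O(n^2)$ bits, and the signed aggregate fits in $\poly(n)$ bits. Iterated integer addition and iterated integer multiplication of polynomially many polynomial-bit integers are both in uniform $\TC^0$ by the construction of Hesse--Allender--Barrington. Combining these arithmetic primitives with the polynomial-size DAG yields a uniform $\TC^0$ circuit family computing $\DET(N)$. If the field is an extension field $\fp{q^k}$ rather than a prime field, I would first reduce to iterated arithmetic modulo an irreducible polynomial of degree $k$, which introduces only polynomial overhead and stays inside $\TC^0$.

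Finally, applying the preceding theorem $\TC^0 \subseteq \CLP$ gives $\DET(N) \in \CLP$, as required. The main obstacle is the middle step: a naive unrolling of the clow-sequence recursion gives only an $\NC^2$ circuit via iterated $n \times n$ matrix multiplication, so care is needed in invoking the Hesse--Allender--Barrington depth-reduction to collapse the arithmetic into constant threshold depth. This is the technical heart of the argument, and it is specific to polynomially-bounded bit-complexity; once it is in place, the inclusion $\DET \in \CLP$ follows immediately from Theorem~\ref{thm:bckls}.
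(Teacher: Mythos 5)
There is a genuine gap in your argument, and it stems from a misreading of the preceding theorem. In the paper the macro \verb|\TCo| denotes $\mathsf{TC}^1$, not $\mathsf{TC}^0$; Theorem~\ref{thm:bckls} of Buhrman et al.\ is the statement $\mathsf{TC}^1 \subseteq \CLP$. You instead invoke ``$\mathsf{TC}^0 \subseteq \CLP$'' and then try to force $\DET$ into $\mathsf{TC}^0$, which is where the argument breaks down.

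The claim that the clow-sequence / $\GapL$ characterization yields a uniform $\mathsf{TC}^0$ circuit for the determinant is not correct. Hesse--Allender--Barrington put \emph{iterated integer multiplication} (a product of polynomially many $\poly(n)$-bit integers) in uniform $\mathsf{TC}^0$, but counting weighted $s$--$t$ paths in a polynomial-size DAG is an iterated \emph{matrix} product, which is $\GapL$-complete; even plain $s$--$t$ reachability is $\NL$-complete and not known to be in $\mathsf{TC}^0$. There is no known ``depth-reduction'' collapsing iterated matrix multiplication to constant threshold depth, and if $\DET \in \mathsf{TC}^0$ one would have $\GapL \subseteq \mathsf{TC}^0$, a dramatic and unexpected collapse. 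You flag the iterated-matrix-multiplication obstacle yourself in your final paragraph, but the resolution you gesture at (``invoking the Hesse--Allender--Barrington depth-reduction'') does not exist for matrices. The fix is simply to stop one level up: the Mahajan--Vinay result puts $\DET$ with $\poly(n)$-bit entries in $\GapL \subseteq \mathsf{TC}^1$, and then Theorem~\ref{thm:bckls} (which really is $\mathsf{TC}^1 \subseteq \CLP$) finishes the job. That is exactly the paper's one-line argument. Your high-level route (Mahajan--Vinay $\to$ circuit class $\to$ \Cref{thm:bckls}) is the intended one; the error is purely in which circuit class sits in the middle.

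Two minor points also worth tightening if you revise: the field-extension reduction is fine, but you should be explicit that reduction modulo an irreducible polynomial keeps you in the same complexity class as the prime-field case, and that the entries being $O(n)$-bit is exactly the ``polynomially many bits'' hypothesis needed for the $\GapL$ bound to apply.
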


\noindent
Second, we extend the inclusion of $\NL$ to show that that
\textit{weighted} connectivity is also in $\CLP$:
\begin{lemma}
\label{lem:compute-min-weight-path}
    Let $G = (V, E)$ be a directed graph, let $s,t \in V$ be a given source
    and target vertex, and $W: E \rightarrow \mathbb{Z}_{\leq \poly(n)}$ be edge
    weights such that $G$ does not have any non-positive weight cycles under $W$.
    Then there exists a $\CLP$ machine which, given $(G, s, t, W)$,
    computes the minimum weight of a simple $s$-$t$ path.
\end{lemma}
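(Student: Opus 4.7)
The plan is to reduce the weighted shortest simple $s$-$t$ path problem to plain reachability on a polynomial-size auxiliary graph, and then appeal to the standard inclusion $\NL \subseteq \CLP$ (which in the paper follows from Theorem~\ref{thm:bckls} via $\NL \subseteq \TCo$).

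First I would establish the key structural fact that, under the hypothesis of no non-positive weight cycles, the minimum weight of a simple $s$-$t$ path equals the minimum weight of an $s$-$t$ \emph{walk}. This is crucial because shortest simple path in general directed graphs with arbitrary weights is $\NP$-hard; the hypothesis bypasses this as follows. Given any $s$-$t$ walk that revisits some vertex $v$, one can excise the closed sub-walk between the two visits to $v$; that closed sub-walk decomposes into simple cycles, each of strictly positive weight by hypothesis, so the resulting walk has strictly smaller weight. Iterating, a minimum-weight walk is necessarily simple, and since every simple path is itself a walk, the two minima coincide.

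Next I would build, implicitly, a layered graph $G' = (V', E')$ on vertex set $V \times \{0, 1, \ldots, W_{\max}\}$, where $W_{\max}$ is a polynomial upper bound on the weight of any simple path (at most $|V|$ times the maximum edge weight, hence $\poly(n)$). For each directed edge $(u,v) \in E$ of weight $w$, I would add to $E'$ an edge from $(u,i)$ to $(v, i+w)$ for every $0 \leq i \leq W_{\max} - w$. Then $s$-to-$t$ walks of weight exactly $k$ in $G$ correspond bijectively to ordinary directed paths from $(s,0)$ to $(t,k)$ in $G'$, and the adjacency relation of $G'$ is trivially logspace-computable from $(G, W)$.

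Finally I would iterate $k$ from $0$ up to $W_{\max}$, and for each $k$ invoke a reachability subroutine to test whether $(t,k)$ is reachable from $(s,0)$ in $G'$; returning the smallest such $k$ gives the desired minimum weight (and a flag of ``unreachable'' if none succeeds). Since $G'$ has polynomial size with logspace-computable adjacency, each reachability query lies in $\NL \subseteq \CLP$, and sequentially composing polynomially many such queries preserves membership in $\CLP$ with polynomial running time. The only conceptually nontrivial step is the walk-versus-simple-path reduction in the second paragraph; the rest is a routine layered-graph construction composed with $\NL$ subroutines inside $\CLP$.
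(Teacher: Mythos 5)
Your proof is correct and follows essentially the same route as the paper: reduce the problem to $\NL \subseteq \CLP$ reachability and use the no-non-positive-cycle hypothesis to argue that the minimizing walk is simple. The only differences are cosmetic — you encode the accumulated weight as an explicit layer in a product graph $V \times \{0,\ldots,W_{\max}\}$ and scan linearly over target weights, whereas the paper has the $\NL$ machine carry the accumulated weight (and a length counter bounding the walk to $\leq |E|$ steps) directly and binary-searches over the weight threshold — and you spell out the walk-to-simple-path reduction in somewhat more detail than the paper does.
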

\begin{proof}
    The problem of deciding whether $G$ contains an $s$-$t$ walk of length
    $\leq |E|$ and weight $\leq \alpha$, for $\alpha \leq \poly(n)$,
    is trivially decidable in $\NL$ and thus in $\CLP$.
    Binary searching over the value of $\alpha$ gives the minimum $\alpha^*$ such that
    $G$ contains an $s$-$t$ walk of length $\leq |E|$ and weight $\leq \alpha^*$.
    Since all cycles in $G$ have strictly positive weight,
    the minimising walk is also guaranteed to be a simple $s$-$t$ path.
\end{proof}

\subsection{Other complexity classes}
Besides catalytic computation, we also work with a few other classes.
First we recall the classic $\NC$ definition of parallel complexity:

\begin{definition}[$\NC$]
    A language $\mathcal{L}$ is computable in (uniform) $\NC$ if there exists
    a (uniform) circuit family $\{C_n\}_{n \in \mathbb{N}}$ such that $C_n$
    has size $\poly(n)$, depth $\log^{O(1)} n$, and decides membership
    in $\mathcal{L}$ on all inputs of size $n$.
\end{definition}

\noindent
We also define the class of problems reducible to the \emphdef{lossy coding}
problem over various objects:

\begin{definition}[Lossy coding and $\LOSSYC$]
    The \emphdef{lossy coding problem} is defined as follows: given a pair of
    algorithms $\Comp:\{0, 1\}^n \rightarrow \{0, 1\}^{n-1}$ and
    $\Decomp : \{0, 1\}^{n-1} \rightarrow \{0, 1\}^n$ as input, our goal is
    to output any $x \in \{0, 1\}^n$ such that $\Decomp(\Comp(x)) \neq x$.\\

    \noindent
    Let $\mathcal{C}$ be a complexity class. We define $\LOSSYC$ as the
    set of languages reducible to the lossy coding problem whose input
    algorithms come from the class $\Cclass$. When no class is given,
    we define $\LOSSY := \LOSSY[\P]$.
\end{definition}

\noindent
Of note, the above subroutines (weighted $s$-$t$ connectivity and $\DET$)
are also in $\NC$, which will be useful for Theorem~\ref{thm:lossy}.

\section{Weighting and Isolating Matchings}
\label{sec:find-match}

Our first task is to set up the basic structure of our algorithm,
which is to find matchings via \textit{isolation}.

\begin{definition}
    Let $U$ be a set and $W:U \rightarrow \mathbb{Z}$ be a weight assignment,
    and let $\mathcal{F} \subseteq {2^U}$ be a family of subsets of $U$.
    We say that $W$ is \emphdef{isolating} for $\mathcal{F}$ if
    \[\exists! \ S_{\iso} \in \mathcal{F} \text{ such that } W(S_{\iso}) = \min_{F \in \mathcal{F}} W(F)\]
    where $\exists!$ means that exactly one such set exists.
\end{definition}

\noindent
Mulmuley, Vazirani, and Vazirani~\cite{MulmuleyVaziraniVazirani87}
showed that, given the ability to compute the determinant,
isolation is sufficient to solve perfect matching.

\begin{theorem}[\cite{MulmuleyVaziraniVazirani87}]
\label{thm:isolated-pm}
    Let $G = (V = L \cup R, E)$ be a bipartite graph, and let
    $W:E \rightarrow \mathbb{Z}_{\leq \poly(n)}$ be a weight assignment
    that isolates a perfect matching $M_{\iso}$ in $G$.
    Then there exists an $\Logspace^{\DET}$ machine which, given input $(G, e \in E)$ and
    access to $W$, determines whether $e \in M_{\iso}$.
\end{theorem}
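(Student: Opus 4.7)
The plan is to follow the classical approach of \cite{MulmuleyVaziraniVazirani87}: turn the isolating weight assignment $W$ into an \emph{Edmonds-style matrix} whose determinant encodes $M_{\iso}$ through $2$-adic valuations, and then reduce edge membership to a second determinant computation on a minor. Concretely, identify $L = R = [n]$ and define the $n \times n$ matrix $N$ by setting $N[i,j] = 2^{W(i,j)}$ if $(i,j) \in E$ and $N[i,j] = 0$ otherwise. Since all weights are at most $\poly(n)$, each entry has $\poly(n)$ bits, so $N$ can be written down on a logspace work tape on demand (each entry is produced from $G$ and $W$).

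The key identity is that for a bipartite graph
\[
\DET(N) \;=\; \sum_{\sigma \in S_n} \text{sgn}(\sigma)\prod_{i=1}^n N[i,\sigma(i)] \;=\; \sum_{M \in \mathcal{M}} \pm\, 2^{W(M)},
\]
where $\mathcal{M}$ is the set of perfect matchings of $G$. Because $W$ isolates $M_{\iso}$, the term $\pm 2^{W(M_{\iso})}$ appears exactly once, while every other term is divisible by $2^{W(M_{\iso})+1}$. Hence $\DET(N) \neq 0$ and its $2$-adic valuation $\nu_2(\DET(N))$ equals $w^* := W(M_{\iso})$ exactly. The first step of the machine is therefore to invoke the $\DET$ oracle on $N$, read off the binary representation of the result, and scan it in logspace to find $w^*$ as the index of its lowest nonzero bit.

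To decide whether a particular edge $e = (i,j)$ lies in $M_{\iso}$, let $N^e$ be the $(n-1)\times(n-1)$ matrix obtained from $N$ by deleting row $i$ and column $j$; this matrix is again constructible from $(G, W)$ in logspace. Expanding the determinant as above, $\DET(N^e) = \sum_{M' } \pm 2^{W(M')}$, where $M'$ ranges over perfect matchings of $G \smallsetminus \{i,j\}$, equivalently perfect matchings of $G$ containing $e$ with the weight of $e$ removed. Thus the minimum weight perfect matching of $G$ that uses $e$ has weight $W(i,j) + \nu_2(\DET(N^e))$, and this minimum is itself isolated among matchings containing $e$ because $M_{\iso}$ is globally isolated. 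Therefore
\[
e \in M_{\iso} \quad \Longleftrightarrow \quad \nu_2(\DET(N^e)) \;=\; w^* - W(i,j).
\]
The machine queries the $\DET$ oracle on $N^e$, reads the lowest nonzero bit position of the answer, and compares it against $w^* - W(i,j)$; all of this is elementary arithmetic on $\poly(n)$-bit numbers and can be done in logspace.

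The only place where care is needed is the claim that the isolated term does not cancel: this uses the fact that a unique minimum-weight matching contributes a single signed power of two, which is strictly smaller in $2$-adic valuation than the sum of the remaining terms, so no sign cancellation can destroy it. This is the main (and essentially only) technical point of the argument; once it is in place, the logspace bookkeeping -- writing down entries of $N$ and $N^e$, invoking the $\DET$ oracle, and comparing $2$-adic valuations of $\poly(n)$-bit integers -- is routine, and gives the desired $\Logspace^{\DET}$ algorithm.
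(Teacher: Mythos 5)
The paper cites this as Theorem~\ref{thm:isolated-pm} from \cite{MulmuleyVaziraniVazirani87} and does not reprove it, so there is no in-paper proof to compare against. Your argument is the standard MVV one and is essentially correct: Edmonds matrix with entries $2^{W(i,j)}$, isolation forcing the $2$-adic valuation of $\DET(N)$ to equal $w^* = W(M_{\iso})$, and deciding $e \in M_{\iso}$ by comparing $\nu_2(\DET(N^e))$ against $w^* - W(e)$.

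One sentence of your justification is slightly off, though the final test is still correct. You claim that the minimum-weight matching containing $e$ ``is itself isolated among matchings containing $e$ because $M_{\iso}$ is globally isolated.'' That is true when $e \in M_{\iso}$ (any other min-weight matching using $e$ would tie $M_{\iso}$, violating uniqueness), but it is false in general when $e \notin M_{\iso}$: there can be several tied minimum-weight matchings through $e$, and their signed terms in $\DET(N^e)$ can cancel, so $\nu_2(\DET(N^e))$ need not equal the min weight of a matching through $e$, and $\DET(N^e)$ can even be zero. The correct argument for the $e \notin M_{\iso}$ direction is a valuation bound rather than an isolation claim: every perfect matching of $G$ through $e$ then has weight $> w^*$, so every term of $\DET(N^e)$ is divisible by $2^{w^* - W(e) + 1}$, giving $\nu_2(\DET(N^e)) > w^* - W(e)$ or $\DET(N^e) = 0$; either way the test correctly returns ``no.'' You should also explicitly treat $\DET(N^e) = 0$ (no lowest nonzero bit) as a ``no'' answer. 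With those fixes the proof is a faithful reproduction of MVV's construction.
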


\noindent
Thus Lemma~\ref{lem:DET-in-cl} gives us a $\CLP$ algorithm for
finding the isolated perfect matching of min-weight, as
Theorem~\ref{thm:isolated-pm} computes $\DET$ on an $n \times n$
matrix whose entries have $\poly(n)$ bits. \\

\noindent
We will also need to extend Theorem~\ref{thm:isolated-pm} to work for
matchings of any fixed size $k \in [n]$. We sum this up with the
observations above into our main algorithm for this section.

\begin{lemma}
\label{lem:isolated-km}
    Let $G = (V = L \cup R, E)$ be a bipartite graph, and let $k \in \mathbb{N}$
    be a parameter whose value is at most the size of the maximum matching of $G$.
    Let $W:E \rightarrow \mathbb{Z}_{\leq \poly(n)}$ be edge weights such that
    $W$ isolates a size $k$ matching $M^k_{\iso}$ in $G$.
    Then there exists a $\CLP$ machine which, given input $(G, e \in E)$ and
    access to $W$, determines whether $e \in M^k_{\iso}$.
\end{lemma}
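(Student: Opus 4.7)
The plan is to reduce the size-$k$ case to the perfect matching case of Theorem~\ref{thm:isolated-pm} via an auxiliary bipartite graph. I attach dummy sets $L' = \{l_1, \ldots, l_{n-k}\}$ on the left and $R' = \{r_1, \ldots, r_{n-k}\}$ on the right, and build $G' = (L \cup L', R \cup R', E \cup E_{\text{aux}})$ where $E_{\text{aux}}$ contains every $(v, r) \in L \times R'$, every $(l, u) \in L' \times R$, and every $(l, r) \in L' \times R'$. The goal is to define polynomial weights $W'$ on $G'$, computable from $(G, W, k)$, whose unique minimum-weight perfect matching restricts to $M^k_{\iso}$ on $E$. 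Applying Theorem~\ref{thm:isolated-pm} to $(G', W')$ and reporting the outcome for the queried edge $e$ then yields the lemma, since the overall procedure lives in $\Logspace^{\DET} \subseteq \CLP$ by Lemma~\ref{lem:DET-in-cl}.

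For the weight design, I fix polynomial parameters $K \gg D \gg C \gg n \cdot \max_{e \in E} W(e)$ and set $W'(e) = D \cdot (W(e) + C)$ for $e \in E$, $W'(v_j, r_i) = j \cdot i$ on $L \times R'$, $W'(l_i, u_j) = i \cdot j$ on $L' \times R$, and $W'(l, r) = K$ on $L' \times R'$. Any perfect matching of $G'$ that uses $j$ edges from $E$ automatically uses $n - j$ edges of type $L$-$R'$, $n - j$ of type $L'$-$R$, and $j - k$ of type $L'$-$R'$, so $j \geq k$. The three-layer scaling enforces a three-step tiebreak on the minimum: (a) the $K$-scale on $L'$-$R'$ edges eliminates them, pinning $j = k$; (b) the $D$-scale on real edges makes the contribution of $W$ dominate the $O(n^3)$-bounded dummy contribution, so isolation of $W$ on size-$k$ matchings forces $M^k = M^k_{\iso}$; (c) the residual freedom is a pair of independent bijections with cost $j \cdot i$, whose minima are each uniquely determined by the strict rearrangement inequality (largest $j$ paired with smallest $i$, and symmetrically for the other side).

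I expect the only delicate point to be layer (b), where one must check that the dummy bijection weights, which depend on which $L$-vertices are left unmatched by $M^k$ and so can swing with $M^k$, cannot overturn the isolation coming from $W$. This is controlled by ensuring $D$ strictly exceeds the maximum possible range of bijection contributions, combined with the fact that integer weights yield a $+1$ gap between $W(M^k_{\iso})$ and $W(M^k)$ for any other size-$k$ matching $M^k$. Once isolation in $G'$ is established, Theorem~\ref{thm:isolated-pm} provides a $\Logspace^{\DET}$ routine that decides, for each edge of $G'$, membership in this unique minimum-weight perfect matching; since the weights on $G'$ remain $\poly(n)$, the $\DET$ queries are in $\CLP$ by Lemma~\ref{lem:DET-in-cl}, and the outer logspace control composes cleanly with them to a single $\CLP$ algorithm that answers the original query $e \in M^k_{\iso}$ for $e \in E$.
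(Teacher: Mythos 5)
Your proof is essentially the paper's own: both pad $G$ with $n-k$ dummy vertices per side, scale the real-edge weights by a large polynomial factor so that isolation of $M^k_{\iso}$ in $G$ carries over, and assign index-product weights to the cross edges so the rearrangement inequality (your layer (c), the paper's Claim~\ref{clique-isolation}) forces a unique minimum-weight extension, after which Theorem~\ref{thm:isolated-pm} and Lemma~\ref{lem:DET-in-cl} finish. The only deviation is that you also include $L'\times R'$ edges and therefore need the extra weight tier $K$ to suppress them, whereas the paper simply omits those edges so that every perfect matching of $G'$ is structurally forced to use exactly $k$ edges of $E$, collapsing your layer (a); your version is still correct, just slightly heavier (likewise the $+C$ shift is harmless but unused).
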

\begin{proof}
    Given $G$, we construct the following graph $G' = (V', E')$ and weight function $W'$:
    \begin{enumerate}
        \item $V' = L' \cup R'$, where $L'$ consists of $L$ and $n-k$ new vertices
        and $R'$ is constructed similarly. 
        \item $E'$ consists of $E$ along with a clique between the new vertices
        on each side with the old vertices on the other side.
        That is, $E' = E \cup \{(u, v) \ | \ u \in L, v \in R' \smallsetminus R\} \cup \{(u, v) \ | \ u \in R, v \in L' \smallsetminus L\}$
        \item Index all vertices in $G'$ by $[2(n+n-k)]$ arbitrarily.
        For $e \in E$, we define $W'(e) = W(e) \cdot 10 \cdot n^4$,
        and $e = (u, v) \in E' \smallsetminus E$ we define $W'(e) = u \cdot v$ to be the
        the product of the indices of the vertices.
    \end{enumerate}

    \noindent
    By construction, every perfect matching $M' \subseteq E'$ in $G'$ corresponds
    to a size $k$ matching $M \subseteq E$ in $G$.
    For any size $k$ matching $M$ of $G$ and perfect matching $M'$ of $G'$,
    $M'$ is defined to be an extension of $M$ if $M' \cap E = M$.
    Note that for any perfect matchings $M_1$ and $M_2$ of $G'$,
    if $W(M_1 \cap E) < W(M_2 \cap E)$, then

    \begin{align*}
        W'(M_2) - W'(M_1) & = (W'(M_2 \cap E) + W'(M_2 \smallsetminus E))\\
        &\qquad - (W'(M_1 \cap E) + W'(M_1 \smallsetminus E)) \\
        & = (W'(M_2 \cap E) - W'(M_1 \cap E)) \\
        &\qquad + (W'(M_2 \smallsetminus E) - W'(M_1 \smallsetminus E)) \\
        & \geq 10 n^4 (W(M_2 \cap E) - W(M_1 \cap E)) - W'(M_1 \smallsetminus E) \\
        &\geq 10 n^4 - 10 n^3 > 0
    \end{align*}
    and thus $W'(M_1) < W'(M_2)$. Therefore the minimum weight perfect matching
    $M'$ of $G'$ must be an extension of $M^k_{\iso}$.  \\

    \noindent
    We say an extension is a minimum weight extension if $W'(M')$ is minimum amongst all
    extensions of $M$.
    We claim that the minimum weight extension of any size $k$ matching
    $M^k$ of $G$ is unique. \\

    \noindent
    Let $L_{\text{unmatched}}$ be vertices in $L$ and $R_{\text{unmatched}}$ be the vertices in
    $R$ which are not matched by $M^k$, and let $L_{\text{new}}$ and $R_{\text{new}}$ be the vertices
    in $L' \smallsetminus L$ and $R' \smallsetminus R$ respectively.
    Any extension of $M^k$ consists of a perfect matching in
    $(L_{\text{new}} \cup R_{\text{unmatched}}, E' \smallsetminus E)$ and in
    $(L_{\text{unmatched}} \cup R_{\text{new}}, E' \smallsetminus E)$, both of
    which are disjoint bipartite cliques whose weight function $W'$ is given by
    $w(u,v) = u \cdot v$. \\

    \noindent    
    We claim that both of these cliques have a unique
    perfect matching, and thus the min-weight extension of $M^k$ is unique:
    \begin{claim}
    \label{clique-isolation}
        Let $G = (V = L \cup R, E)$ be an $s \times s$ bipartite clique, and        
        let $W:E \rightarrow \mathbb{Z}$ be weights defined as $W(e = (u, v)) = u \cdot v$
        for an arbitrary indexing of $V$.
        Then $W$ isolates a perfect matching in $G$.
    \end{claim}
    \begin{proof}
        Let $L = \{l_1, \dots, l_s\}$ and $R = \{r_1, \dots, r_s\}$ be ordered by increasing indices. We claim that $\{(l_i, r_{n+1-i})\}_{i \in [n]}$ is the unique minimum weight perfect matching in $G$. For any other perfect matching $M$, there must exist $u_1 < u_2 \in L$ and $v_1 < v_2 \in R$ (again, ordered by indices) such that $(u_1, v_1), (u_2, v_2) \subseteq M$. Then for $M' = (M \smallsetminus \{(u_1, v_1), (u_2, v_2)\}) \cup \{(u_1, v_2), (u_2, v_1)\}$, it is easy to verify that $W(M') = W(M) -  (u_2 - u_1)(v_2 - v_1)$, and thus $W(M') < W(M)$, which shows that $M$ is not a minimum weight perfect matching.
    \end{proof}

    \noindent
    Thus $W$ isolates a perfect matching $M^*$ in $G'$, namely the unique minimum
    weight extension of the minimum weight perfect matching $M_{\iso}^k$ of size $k$.
    Applying Theorem~\ref{thm:isolated-pm} and Lemma~\ref{lem:DET-in-cl} shows
    that extracting $M^*$, and from it $M_{\iso}^k$, can be done in $\CLP$.
    %
    %
    %
\end{proof}




\section{Maximum Matching via Isolation}
\label{sec:induction}

In the previous section we showed how to extract a matching of
size $k$ in $G$; however, this requires us to know that $W$ isolates
such a matching for this value of $k$. In this section we show how
to find the maximum $k$ for which this holds.\footnote{An earlier result
of Hoang, Mahajan, and Thierauf~\cite{HoangMahajanThierauf06}
shows how to determine whether edge weights isolate a perfect matching,
assuming that one exists; however,
our algorithm needs to additionally distinguish between the case
where a size $k$ matching simply does not exist, and the case where a
size $k$ matching exists but is not isolated.} \\

\noindent
Our algorithm will do this for $k$ inductively.
This is easy to do for $k = 1$, as there are two min-weight matchings
iff there are two min-weight edges, and there are no matchings iff
there are no edges. Thus for the remainder of this section,
assume that $W$ isolates a unique min-weight matching $M_{\iso}^k$
of size $k$, computable in $\CLP$ by Lemma~\ref{lem:isolated-km},
which we will use to test $k+1$.

\subsection{Residual Graphs and Augmenting Paths}

The important tool in our construction will be the \textit{residual graph}
of our matching $M_{\iso}^k$. This construction is a standard tool in graph algorithms, and can be found, along with the other facts we state, in the monograph \cite{KarpinskiRytter98} or textbook \cite{CLRS}. For readers familiar with matching theory, this is exactly the residual graph one would obtain from the max flow instance corresponding to maximum matching.

\begin{definition}[Residual Graph]
    Let $G = (V = L \cup R, E)$ be a bipartite graph, $M$ be a matching in $G$,
    and $W:E \rightarrow \mathbb{Z}$ be edge weights.
    The \emphdef{residual graph} of $M$, which we denote by
    $G^M_{\res} = (V^M_{\res}, E^M_{\res})$, is a directed graph whose
    vertex set $V^M_{\res}$ consists of $V$ plus a new source $s$ and a new sink $t$.
    The edge set $E^M_{\res}$ consists of the following edges where $u \in L$
    and $v \in R$:
    \begin{enumerate}
        \item if $u$ is unmatched by $M$ we add an edge $(s \rightarrow u)$, and
        if $v$ is unmatched by $M$ we add an edge $(v \rightarrow t)$
        \item if $(u,v) \in M$ we add an edge $(v \rightarrow u)$
        \item if $(u,v) \in E \smallsetminus M$ we add an edge $(u \rightarrow v)$
    \end{enumerate}
    \noindent
    We will drop the $M$ superscript in cases where the usage is clear.
\end{definition}

\noindent
Because $M_{\iso}^k$ can be constructed in $\CLP$, the graph $G_{\res}$
can be constructed in $\CLP$ as well.
As we will soon see, paths from $s$ to $t$ in $G^M_{\res}$ (ignoring the first
and last edges) are closely related to matchings in $G$.

\begin{definition}[Augmenting Path]
Let $G = (V, E)$ be a graph, $M$ be a matching in $G$, and
$W:E \rightarrow \mathbb{Z}$ be a weight assignment.
A path $P = \{e_1, \dots, e_m\}$ on vertices $\{v_1, \dots, v_{m+1}\}$
is an \emphdef{augmenting path} with respect to $M$ if:
\begin{enumerate}
    \item $v_1$ and $v_{m+1}$ are not matched by $M$.
    \item For all odd $i \in [m]$, $e_i \notin M$.
    \item For all even $i \in [m]$, $e_i \in M$.
\end{enumerate}
\end{definition}

\noindent
We will be using a slightly different weight scheme for residual
graphs and augmenting paths, which will be convenient for talking
about matchings in their context.

\begin{definition}
Let $G = (V = L \cup R, E)$ be a bipartite graph, $M$ be a matching in $G$,
and $W:E \rightarrow \mathbb{Z}$ be a weight assignment.
We define the \emphdef{alternating} weight function as
\[W_{\alt}^M(e) \vcentcolon= 
     \begin{cases}
       -W(e), &\quad\text{if } e \in M\\
       W(e), &\quad\text{if } e \notin M \\
     \end{cases}
\]
Furthermore, for residual graph $G^M_{\res}$ we
define the \emphdef{residual} weight function as
\[W_{\res}^M(e) \vcentcolon= 
     \begin{cases}
       0, &\quad\text{if } e \notin E\\
       W_{\alt}^M(e), &\quad\text{otherwise} \\
     \end{cases}
\]
\noindent
Whenever we use weights in $G^M_{\res}$, we are referring to the weights $W^M_{\res}$,
while for augmenting paths $P$ with respect to $M$ we use $W_{\alt}^M$.
\end{definition}

\noindent
It is straightforward to observe that any augmenting path
$P$ in $G$ can be extended to a path in the residual graph
$G_{\res}^M$ via two additional edges and vice versa, and
furthermore both paths have the same weight; we will analyze
this fact in more detail later. \\

\noindent
Important to our algorithm is that augmenting paths to size $k$ matchings in $G$ are, in a sense,
equivalent to matchings of size $k+1$.

\begin{fact}[Berge's Theorem~\cite{Berge57}]
\label{fact:augmented-matching}
    Let $G = (V, E)$ be a graph, $M$ be a matching in $G$,
    and $W:E \rightarrow \mathbb{Z}$ be a weight assignment.
    Then $M$ is maximum iff there do not exist any augmenting paths with respect to $M$.
    Furthermore, for any augmenting path $P$, $M \oplus P$ is a matching in $G$
    of size $|M| + 1$.  
\end{fact}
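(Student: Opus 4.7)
The plan is to handle the ``furthermore'' clause first, since the forward direction of the iff then falls out of it, and then to treat the harder reverse direction separately via a symmetric difference argument with a hypothetically larger matching.

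First, I would verify that $M \oplus P$ is a matching and has size $|M|+1$ by a straightforward local check at each vertex. Write $P = \{e_1,\dots,e_m\}$ on vertices $v_1,\dots,v_{m+1}$. By the definition of an augmenting path, $m$ is odd (so that odd-indexed edges miss $M$ at both endpoints and even-indexed edges lie in $M$), which means $P$ contains exactly $(m+1)/2$ non-matching edges and $(m-1)/2$ matching edges, so $|M \oplus P| = |M| + 1$. To see that $M \oplus P$ is a matching, note that vertices not on $P$ are unaffected; the two endpoints $v_1$ and $v_{m+1}$ were unmatched in $M$ and gain exactly one edge each (namely $e_1$ and $e_m$); and each interior vertex $v_i$ has exactly two edges of $P$ incident to it, one in $M$ and one not in $M$, so after toggling it is still covered by exactly one edge. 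In particular, if such a $P$ exists then $M$ is not maximum, giving the ``only if'' direction of the iff.

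For the harder direction, assume $M$ is not maximum and let $M^\star$ be any matching with $|M^\star| > |M|$; I would argue that the symmetric difference $H := M \Delta M^\star$ (viewed as a subgraph of $G$) must contain an augmenting path with respect to $M$. Every vertex of $H$ has degree at most $2$ since it is incident to at most one edge of $M$ and at most one of $M^\star$, so the connected components of $H$ are simple paths and even cycles whose edges alternate between $M$ and $M^\star$. Even cycles contribute equally to $|M|$ and $|M^\star|$, as do paths whose first and last edges come from different sets; the only components that can make $|M^\star|$ exceed $|M|$ are paths that begin and end with an $M^\star$-edge. Since $|M^\star| > |M|$, at least one such component $P$ exists, and by construction its endpoints are incident to no $M$-edge (otherwise the path would extend) and its edges alternate out-of-$M$, in-$M$, out-of-$M$, $\dots$, matching the definition of an augmenting path for $M$.

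The main (minor) obstacle is the parity and endpoint bookkeeping: one must argue cleanly that the endpoints of the distinguished component in $H$ are unmatched by $M$, not merely unmatched by $M$ within $H$. This follows because if an endpoint $v$ of the $H$-component $P$ were matched by $M$ via some edge $e$, then either $e \in M^\star$ (so $e \notin H$, fine) or $e \notin M^\star$, in which case $e \in H$ and $v$ would have $H$-degree at least two along $P$'s terminal edge plus $e$, contradicting $v$ being an endpoint of $P$. Once this observation is in place, both directions combine to give the iff, and together with the first paragraph this proves the fact.
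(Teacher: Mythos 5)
The paper states this as Berge's theorem, citing \cite{Berge57}, and does not give a proof of its own; it is treated as a known classical fact. Your proposed proof is the standard symmetric-difference argument and is essentially the textbook route, but there is one small gap in the endpoint-bookkeeping step that you flagged yourself.

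You want to show that an endpoint $v$ of the distinguished $H$-component $P$ (a maximal alternating path beginning and ending with $M^\star$-edges) is unmatched by $M$. You consider an edge $e \in M$ incident to $v$ and split into the cases $e \in M^\star$ and $e \notin M^\star$. For $e \notin M^\star$ you correctly observe that $e \in H$, raising the $H$-degree of $v$ above one and contradicting $v$ being an endpoint of $P$. But for $e \in M^\star$ you write ``(so $e \notin H$, fine)'' and move on, which does not dispose of the case: in that scenario $v$ would genuinely be matched by $M$, so the conclusion you want would simply fail, and no contradiction has been derived. The missing argument is: if $e \in M \cap M^\star$, then $v$ is incident to two distinct edges of $M^\star$, namely $e$ and the terminal edge of $P$ (which lies in $M^\star \setminus M$ since $P$ begins and ends with $M^\star$-edges, and is therefore distinct from $e \in M$); this contradicts $M^\star$ being a matching. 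With that one-line fix the reverse direction is complete. The rest of the argument (the local degree check for $M \oplus P$, the parity count giving $|M \oplus P| = |M|+1$, and the classification of $H$-components into even cycles and three kinds of alternating paths) is correct and matches the standard proof.
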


\noindent
An immediate consequence of Fact~\ref{fact:augmented-matching} is that
we can test if there are no matchings of size $k+1$.

\begin{lemma}
\label{lem:check-if-matching-is-maximum}
    Let $G = (V = L \cup R, E)$ be a bipartite graph and $W:E \rightarrow \mathbb{Z}$
    be a weight assignment.
    Let $k \in [n]$ such that $W$ isolates a size $k$ matching $M^k_{\iso}$.
    Then there exists a $\CLP$ machine which, given $(G, W, k)$, decides whether
    $M^k_{\iso}$ is a maximum matching.
\end{lemma}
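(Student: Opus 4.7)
The plan is to reduce the question to $s$-$t$ reachability in the residual graph $G^{M^k_{\iso}}_{\res}$, which lies in $\NL \subseteq \CLP$. By Fact~\ref{fact:augmented-matching} (Berge's theorem), $M^k_{\iso}$ is a maximum matching in $G$ if and only if there is no augmenting path with respect to $M^k_{\iso}$. A standard observation about the residual graph construction is that augmenting paths in $G$ with respect to $M^k_{\iso}$ are in one-to-one correspondence with simple directed $s$-$t$ paths in $G^{M^k_{\iso}}_{\res}$ (the first edge $s \to u$ picks an unmatched left vertex, the last edge $v \to t$ picks an unmatched right vertex, and the middle edges alternate between non-matching edges oriented $L \to R$ and matching edges oriented $R \to L$, exactly matching the definition of an augmenting path). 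Hence $M^k_{\iso}$ is maximum iff $t$ is unreachable from $s$ in $G^{M^k_{\iso}}_{\res}$.

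To turn this into a $\CLP$ algorithm, I would first invoke Lemma~\ref{lem:isolated-km} to query, edge by edge, whether each $e \in E$ lies in $M^k_{\iso}$; this suffices to simulate adjacency-list access to the residual graph without ever storing it explicitly. In particular, to decide whether a directed edge of $G^{M^k_{\iso}}_{\res}$ exists between two vertices, one only needs $O(\log n)$ bits of workspace plus one call to the $\CLP$ subroutine of Lemma~\ref{lem:isolated-km}. Given such oracle-style access to the edges of $G^{M^k_{\iso}}_{\res}$, $s$-$t$ reachability is a standard $\NL$ computation, and since $\NL \subseteq \TCo \subseteq \CLP$ by Theorem~\ref{thm:bckls} the whole procedure fits inside $\CLP$. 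The algorithm accepts (declaring $M^k_{\iso}$ maximum) precisely when this reachability query rejects.

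The only minor subtlety is the standard one of composing $\CLP$ subroutines: the reachability procedure treats each adjacency query as an oracle call, but here the ``oracle'' is itself a $\CLP$ machine from Lemma~\ref{lem:isolated-km}. This is handled in the usual way by reusing the catalytic tape for the subroutine and restoring it before returning, so the composition remains in $\CLP$ with polynomial running time. I expect the main (though still routine) obstacle to be articulating the bijection between augmenting paths in $G$ and simple $s$-$t$ paths in $G^{M^k_{\iso}}_{\res}$ cleanly enough to invoke Fact~\ref{fact:augmented-matching}; everything else reduces to previously established inclusions.
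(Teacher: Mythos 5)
Your proof takes essentially the same route as the paper: compute $M^k_{\iso}$ via Lemma~\ref{lem:isolated-km}, build the residual graph, apply Berge's theorem (Fact~\ref{fact:augmented-matching}) to reduce maximality to the nonexistence of an $s$-$t$ path in $G_{\res}$, and check reachability using $\NL \subseteq \CLP$. You simply spell out more detail (the augmenting-path/$s$-$t$-path correspondence and the composition of $\CLP$ subroutines) than the paper's terser version, but the argument and its ingredients are the same.
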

\begin{proof}
    By Lemma~\ref{lem:isolated-km} we can compute $M^k_{\iso}$, and from it we
    can build the residual graph $G_{\res}$. 
    By Fact~\ref{fact:augmented-matching}, $M^k_{\iso}$ is a maximum matching
    iff $t$ is not reachable from $s$ in $G_{\res}$, which we can check in $\CLP$.
\end{proof}

\noindent
A sequential algorithm could of course repeatedly apply \Cref{lem:check-if-matching-is-maximum} to search for larger matchings, as proposed in \cite{Kuhn55}. However, this requires large space to store the matching after a certain number of iterations. Instead, we will utilise a hybrid approach between the isolation lemma based approach of \cite{MulmuleyVaziraniVazirani87} and the augmenting paths based approach of \cite{Kuhn55}.


\subsection{Matching Weights and Augmenting Paths}

For the rest of this section we assume that at least one augmenting path
exists, and thus there is some matching of size $k+1$. Our goal is to
use $G_{\res}$ to understand all matchings of size $k+1$,
and in particular the min-weight matchings.   

\begin{lemma}
\label{lem:sym-dif}
    Let $G = (V = L \cup R, E)$ be a bipartite graph and $W:E \rightarrow \mathbb{Z}$
    be edge weights such that $W$ isolates a size $k$ matching $M^k_{\iso}$ in $G$.
    Let $M^{k+1}$ be any minimum weight size $k+1$ matching in $G$.
    Then the symmetric difference $M^k_{\iso} \Delta M^{k+1}$ is a single
    augmenting path with respect to $M^k_{\iso}$.
\end{lemma}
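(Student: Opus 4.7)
The plan is to analyze the component structure of $M^k_{\iso} \Delta M^{k+1}$ and use \emph{both} the isolation of $M^k_{\iso}$ and the weight-minimality of $M^{k+1}$ to rule out every component type except a single augmenting path.

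I would start with the classical fact that the symmetric difference of two matchings decomposes into vertex-disjoint components, each an even alternating cycle or an alternating path whose edges alternate between the two matchings. For each component $C$ define $\delta(C) := |C \cap M^{k+1}| - |C \cap M^k_{\iso}| \in \{-1, 0, +1\}$ and $w(C) := W(C \cap M^{k+1}) - W(C \cap M^k_{\iso})$. Cycles have $\delta = 0$, and a path has $\delta = +1$ exactly when both endpoint edges come from $M^{k+1}$; such a component is precisely an augmenting path with respect to $M^k_{\iso}$. Because $|M^{k+1}| = |M^k_{\iso}| + 1$, the $\delta$-values sum to $1$, so at least one augmenting path exists.

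A routine check (every internal vertex of $C$ has its unique $M^k_{\iso}$- and $M^{k+1}$-edges inside $C$) shows that both $M^k_{\iso} \oplus C$ and $M^{k+1} \oplus C$ are valid matchings, of sizes $k + \delta(C)$ and $k+1 - \delta(C)$ and weights $W(M^k_{\iso}) + w(C)$ and $W(M^{k+1}) - w(C)$, respectively. The heart of the argument is then a two-step elimination. \textbf{Step (a)}: if $C$ is a cycle or a path with $\delta(C) = 0$, then $M^k_{\iso} \oplus C$ is a distinct size-$k$ matching, so isolation forces $w(C) > 0$, while $M^{k+1} \oplus C$ is a distinct size-$(k{+}1)$ matching, so minimality of $M^{k+1}$ forces $w(C) \leq 0$, a contradiction; hence no such components exist. \textbf{Step (b)}: pick an increasing component $C$ (guaranteed by $\sum \delta = 1$); if $C$ were not the entire symmetric difference, then $M^{k+1} \oplus C$ would be a size-$k$ matching distinct from $M^k_{\iso}$, giving $w(C) < W(M^{k+1}) - W(M^k_{\iso})$ by isolation, while $M^k_{\iso} \oplus C$ would be a size-$(k{+}1)$ matching distinct from $M^{k+1}$, giving $w(C) \geq W(M^{k+1}) - W(M^k_{\iso})$ by minimality, another contradiction. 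Therefore $C$ is the entire symmetric difference.

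The expected obstacle is step~(b): decreasing paths cannot be excluded directly, because flipping a $\delta = -1$ component in either matching lands in the wrong cardinality for either hypothesis to bite. The right move is instead to fix an increasing component $C$ and apply isolation through $M^{k+1} \oplus C$ and minimality through $M^k_{\iso} \oplus C$ simultaneously; this single pair of flips both pins down uniqueness of the augmenting path and rules out any stray decreasing components in one stroke.
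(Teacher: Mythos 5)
Your proposal is correct and follows the same overall strategy as the paper: decompose $M^k_{\iso}\Delta M^{k+1}$ into alternating components and play the strict uniqueness of $M^k_{\iso}$ off the (non-strict) minimality of $M^{k+1}$ via an exchange argument. Your step~(a), ruling out $\delta=0$ components, is exactly the paper's Claim~\ref{claim:sym-dif-helper} specialised to a single component. Where you diverge is in how the remaining components are handled. The paper abstracts the exchange argument into a claim about \emph{any} nonempty $S\subseteq H$ with $\delta(S)=0$, and then, to kill the case of several augmenting paths, manufactures such an $S$ by pairing one path augmenting with respect to $M^k_{\iso}$ with one augmenting with respect to $M^{k+1}$ (these come in counts differing by one, so if there is more than one path there is a pair to take). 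Your step~(b) instead fixes a single increasing component $C$ (one exists since $\sum_C\delta(C)=1$) and applies the two bounds \emph{across} the two matchings: $M^{k+1}\oplus C$ has size $k$, so isolation gives $w(C)<W(M^{k+1})-W(M^k_{\iso})$ as soon as $C\neq H$, while $M^k_{\iso}\oplus C$ has size $k+1$, so minimality gives $w(C)\ge W(M^{k+1})-W(M^k_{\iso})$. This cross-comparison with a single $\delta=1$ set avoids the pairing step entirely and wipes out decreasing components and extra increasing components in one stroke; it is arguably slightly cleaner and more local. The paper's pairing version buys a single reusable claim (stated for arbitrary $S$), which it phrases via the alternating weight $W_{\alt}$ that is then reused in the rest of Section~\ref{sec:induction}. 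Both proofs are sound; the underlying engine is identical, and only the bookkeeping of which sets to flip differs.
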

\begin{proof}
    Define $H := M^k_{\iso} \Delta M^{k+1}$, and assume for contradiction that $H$ is
    not a single augmenting path. We will go through the different potential cases for $H$,
    showing that either $M^k_{\iso}$ or $M^{k+1}$ can be modified to contradict either
    their minimality or, in the case of $M^k_{\iso}$, its uniqueness. \\

    \noindent    
    The proof of the following claim is in the spirit of \cite{DattaKulkarniRoy10}, which introduced the relationship between alternating weights and isolated matchings.
    \begin{claim}
    \label{claim:sym-dif-helper}
        There does not exist any set $\emptyset \subsetneq S \subseteq H$ satisfying the following properties:
        \begin{enumerate}
            \item $M^k_{\iso} \oplus S$ is a matching in $G$ of size $k$.
            \item $M^{k+1} \oplus S$ is a matching in $G$ of size $k+1$.
        \end{enumerate}
    \end{claim}
    \begin{proof}
        Note that, since every $e \in S$ belongs to either $M^k_{\iso}$ or $M^{k+1}$ and
        not the other, we have
        \[W^{M^{k}_{\iso}}_{\alt}(S) = \sum_{e \in S \cap M^k_{\iso}} -W(e) + \sum_{e \in S \cap M^{k+1}}W(e) = -W^{M^{k+1}}_{\alt}(S)\]
        \noindent
        There are two cases:
        \begin{enumerate}
            \item $W^{M^{k}_{\iso}}_{\alt}(S) \leq 0$ : In this case, $M^k_{\iso} \oplus S$
            is a distinct matching of size $k$ with weight
            $W(M^k_{\iso}) + W^{M^{k}_{\iso}}_{\alt}(S) \leq W(M^k_{\iso})$,
            which contradicts the fact that $M^k_{\iso}$ is the \textit{unique}
            minimum weight matching of size $k$.
            \item $W^{M^{k}_{\iso}}_{\alt}(S) > 0$: In this case, $M^{k+1} \oplus S$
            is a matching of size $k+1$ with weight
            $W(M^{k+1}) + W^{M^{k+1}}_{\alt}(S) = W(M^{k+1}) - W^{M^{k}_{\iso}}_{\alt}(S) < W(M^{k+1})$, which contradicts the minimality of $M^{k+1}$. \qedhere
        \end{enumerate}
    \end{proof}

    \noindent
    We now show that such an $S$ must exist in any $H$ which does not consist of a single augmenting
    path. Clearly any path in $H$ alternates between edges of $M^k_{\iso}$ and $M^{k+1}$,
    and thus the connected components of $H$ are of the following four types:
    \begin{enumerate}
        \item even length alternating cycles.
        \item even length alternating paths.
        \item augmenting paths with respect to $M^k_{\iso}$. 
        \item augmenting paths with respect to $M^{k+1}$.  
    \end{enumerate}
    The former two cases are immediate; any even length cycle or path has an equal number
    of edges in both $M_{\iso}^k$ and $M^{k+1}$, and so taking $S$ to be this
    contradicts Claim~\ref{claim:sym-dif-helper}.
    Thus we can assume all connected components of $H$ are augmenting paths with respect to
    $M^k_{\iso}$ or $M^{k+1}$. \\

    \noindent
    Let $P_{k}$ be the set of augmenting paths in $H$ with respect to
    $M^k_{\iso}$, and let $P_{k+1}$ be the set of augmenting paths with respect to $M^{k+1}$.
    Since $|M^{k+1}| = |M^k_{\iso}| + 1$, we have $|P_{k+1}| = |P_k| + 1$, and by assumption
    we do not have $|P_{k+1}| = 1$ and $|P_k| = 0$; thus $P_k \neq \emptyset$.
    Define $S = P \cup P'$ for any any $P \in P_k$ and $P' \in P_{k+1}$, which
    gives us a contradiction by applying Claim~\ref{claim:sym-dif-helper}.
\end{proof}

\noindent
Because paths in $G_{\res}$ correspond to augmenting paths of equal weight,
Lemma~\ref{lem:sym-dif} implies that isolating $k+1$-size matchings is
equivalent to isolating paths in $G_{\res}$.

\begin{lemma}
\label{lem:isolation-equivalence}
    Let $G = (V = L \cup R, E)$ be a bipartite graph and $W:E \rightarrow \mathbb{Z}$ be
    edge weights such that $W$ isolates a size $k$ matching $M^k_{\iso}$ in $G$.
    Then $W$ isolates a size $k+1$ matching in $G$ iff
    $W_{\res}$ isolates an $s$-$t$ path in $G_{\res}$.
\end{lemma}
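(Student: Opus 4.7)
The plan is to build a weight-preserving bijection between simple $s$-$t$ paths in $G_{\res}$ and augmenting paths with respect to $M^k_{\iso}$ in $G$, and then transfer uniqueness of min-weight objects through this bijection using Lemma~\ref{lem:sym-dif}.

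First, I would analyze the structure of $G_{\res}$. By the definition of the residual graph, any simple $s$-$t$ path must begin with an edge $s \to u_1$ for some $M^k_{\iso}$-unmatched $u_1 \in L$, alternate between non-matching edges $u_i \to v_i$ (for $(u_i,v_i) \in E \smallsetminus M^k_{\iso}$) and matching edges $v_i \to u_{i+1}$ (for $(u_{i+1},v_i) \in M^k_{\iso}$), and end with an edge $v_m \to t$ for some $M^k_{\iso}$-unmatched $v_m \in R$. Dropping the endpoints $s$ and $t$ yields an augmenting path with respect to $M^k_{\iso}$ in $G$, and the map ``prepend $s$ and append $t$'' is the inverse, so this is a bijection. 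Since the edges incident to $s$ and $t$ carry $W_{\res}$-weight $0$, the residual weight of any such $s$-$t$ path equals the alternating weight $W^{M^k_{\iso}}_{\alt}$ of the corresponding augmenting path.

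Next, Fact~\ref{fact:augmented-matching} together with a direct calculation gives that for every augmenting path $P$ with respect to $M^k_{\iso}$, the symmetric difference $M^k_{\iso} \oplus P$ is a size $k+1$ matching of weight $W(M^k_{\iso}) + W^{M^k_{\iso}}_{\alt}(P)$. In the other direction, Lemma~\ref{lem:sym-dif} tells us that every minimum-weight size $k+1$ matching $M^{k+1}$ satisfies the property that $M^k_{\iso} \Delta M^{k+1}$ is a single augmenting path $P$, so $M^{k+1} = M^k_{\iso} \oplus P$. Combined with the bijection of the previous paragraph, the map $P \mapsto M^k_{\iso} \oplus P$ restricts to a bijection between minimum-weight simple $s$-$t$ paths in $G_{\res}$ under $W_{\res}$ and minimum-weight size $k+1$ matchings in $G$ under $W$, shifting weights by the constant $W(M^k_{\iso})$.

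From here the biconditional is immediate: there is a unique minimum-weight size $k+1$ matching in $G$ iff there is a unique minimum-weight simple $s$-$t$ path in $G_{\res}$, which is exactly the desired isolation equivalence. The essential input is Lemma~\ref{lem:sym-dif}; without it, a minimum-weight $(k+1)$-matching could in principle differ from $M^k_{\iso}$ via additional alternating cycles or parallel augmenting paths, and the forward map $P \mapsto M^k_{\iso} \oplus P$ might fail to surject onto minimum-weight matchings. The main obstacle is therefore correctly invoking Lemma~\ref{lem:sym-dif} on both sides of the equivalence; the bijection itself and the weight accounting are routine bookkeeping.
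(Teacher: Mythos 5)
Your proof is correct and follows essentially the same route as the paper: pass through augmenting paths as the intermediate object, use Lemma~\ref{lem:sym-dif} to show every minimum-weight size-$(k+1)$ matching has the form $M^k_{\iso} \oplus P$, and compose with the weight-preserving correspondence between augmenting paths and simple $s$-$t$ paths in $G_{\res}$. The one thing the paper's proof adds, which yours omits, is a verification that $G_{\res}$ has no non-positive weight cycle under $W_{\res}$ (because any such cycle $C$ would make $M^k_{\iso} \oplus C$ a distinct size-$k$ matching of weight at most $W(M^k_{\iso})$, contradicting unique minimality). That observation is not strictly needed for the iff when ``$s$-$t$ path'' is read as a simple path, but it guarantees that the minimum-weight $s$-$t$ \emph{walk} is simple, which is the hypothesis required by Lemmas~\ref{lem:compute-min-weight-path}, \ref{lem:compute-threshold-edge}, and \ref{lem:decompression-threshold-edge} when they are later applied to $G_{\res}$; so if your proof is to serve the same downstream role as the paper's, this step should be folded in.
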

\begin{proof}
    We show that both are equivalent to showing $W_{\alt}$ isolates an augmenting path
    with respect to $M^k_{\iso}$.
    Let $M^{k+1}$ be a minimum weight size $k+1$ matching in $G$.
    By Lemma~\ref{lem:sym-dif}, $M^{k+1} \Delta M^k_{\iso}$ is an augmenting path
    with respect to $M^k_{\iso}$ of weight
    \[W(M^{k+1}) = W(M^k_{\iso}) + W_{\alt}(M^{k+1} \Delta M^k_{\iso})\]
    \noindent
    Moreover, for any augmenting path $P$ with respect to $M^k_{\iso}$,
    \[W(M^k_{\iso} \oplus P) = W(M^k_{\iso}) + W_{\alt}(P)\]
    \noindent
    By the minimality of $M^{k+1}$, \[W_{\alt}(M^{k+1} \Delta M^k_{\iso}) \leq W_{\alt}(P)\]
    for every augmenting path $P$ with respect to $M^k_{\iso}$.
    Thus, $M^{k+1} \Delta M^k_{\iso}$ is a minimum weight augmenting path,
    and for any minimum weight augmenting path $P$, $M^k_{\iso} \oplus P$
    is a minimum weight size $k+1$ matching. \\

    \noindent    
    If $W$ does not isolate a size $k+1$ matching, there exist at least two
    minimum weight matchings of size $k+1$, and we let
    $M^{k+1}_1$ and $M^{k+1}_2$ be any two such matchings.
    Since they are distinct matchings, we conclude that $M^{k+1}_1 \Delta M^k_{\iso}$
    and $M^{k+1}_2 \Delta M^k_{\iso}$ are distinct minimum weight augmenting paths
    with respect to $M^k_{\iso}$.
    %
    Conversely, assume that there exist distinct minimum weight augmenting paths $P_1$
    and $P_2$. Then $M^k_{\iso} \oplus P_1$ and $M^k_{\iso} \oplus P_2$ are
    distinct minimum weight size $k+1$ matchings. \\

    \noindent
    Second, we show that $W_{\alt}$ isolates an augmenting path with
    respect to $M^k_{\iso}$ iff $W$ isolates an $s$-$t$ path in $G_{\res}$.
    As observed before, every augmenting path with respect to $M^k_{\iso}$ gives
    a simple $s$-$t$ path in $G_{\res}$ and vice versa.
    It is thus sufficient to show that the shortest $s$-$t$ path in $G_{\res}$
    is always simple, i.e. $G_{\res}$ does not contain any cycle $C$ such that
    $W_{\res}(C) \leq 0$ with respect to $W$.
    Assuming otherwise, since $C$ contains an equal number of edges
    in and not in $M^k_{\iso}$, taking $M^k_{\iso} \oplus C$ gives
    us a new matching of size $k$ and weight
    $$W(M^k_{\iso}) + W_{\res}(C) = W(M^k_{\iso}) + W_{\alt}(C) \leq W(M^k_{\iso})$$
    which contradicts the unique minimality of $M^k_{\iso}$.
    %
\end{proof}

\section{Compression of Weight Assignments}
\label{sec:compress}

In this section, we combine our previous algorithms with a new
step for $\CLP$, namely compressing the weight function in the
case when it does not isolate a matching of some size $k$.
This will allow us to use our catalytic tape itself as providing
a weight function, since it will either a) be random enough
to successfully run the algorithm as outlined above,
or b) we will be able to compress enough space to compute $\MATCH$ in $\P$.

\subsection{Recursive Step: Termination, Isolation, or Failure}

We begin by identifying the information
that will be needed in the case when
$W$ does not isolate a matching of size $k+1$.

\begin{definition}[Threshold Edge]
    Let $G = (V, E)$ be a directed graph, $s \in V$ be a source vertex,
    $t \in V$ be a target vertex,
    and $W: E \rightarrow \mathbb{Z}$ be edge weights such that $G$ does
    not have any non-positive weight cycles under $W$.
    We say that an edge $e \in E$ is a \emphdef{threshold edge}
    if there exist two minimum weight $s$-$t$ paths $P_1, P_2$ in $G$
    such that $e \in P_1$ and $e \notin P_2$.
    %
\end{definition}

\noindent
By definition it is clear that a threshold edge exists iff $W$ does
not isolate an $s$-$t$ path in $G$. We also observe
that by computing reachability in $\CLP$, we can find such an edge:





\begin{lemma}
\label{lem:compute-threshold-edge}
    Let $G = (V, E)$ be a directed graph, $s \in V$ be a source vertex,
    $t \in V$ be a sink vertex, and $W: E \rightarrow \mathbb{Z}_{\leq \poly(n)}$
    be edge weights such that $G$ does not have any non-positive weight cycles under $W$.
    Furthermore, assume that there exist at least two distinct min-weight $s$-$t$ paths
    under $W$.
    Then there exists a $\CLP$ machine which, given $(G, s, t, W)$, outputs a threshold edge.
\end{lemma}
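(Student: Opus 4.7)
The plan is to find a threshold edge by iterating over all edges and, for each, performing two $\CLP$ shortest-path queries via Lemma~\ref{lem:compute-min-weight-path}. First I would compute the overall min-weight distance $W^* = d_G(s,t)$. Then for each edge $e = (u,v) \in E$ I would check two conditions in sequence:
\begin{enumerate}
    \item \emph{$e$ is on some min-weight $s$-$t$ path}, which holds iff $d_G(s,u) + W(e) + d_G(v,t) = W^*$. Because $G$ has no non-positive cycles, the min-weight walk through $e$ is a simple path, so this quantity exactly equals the min weight of a simple $s$-$t$ path using $e$.
    \item \emph{$e$ is not on every min-weight $s$-$t$ path}, which holds iff $t$ is reachable from $s$ in $G \smallsetminus \{e\}$ and $d_{G \smallsetminus \{e\}}(s,t) = W^*$. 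Removing an edge cannot create any new non-positive cycle, so Lemma~\ref{lem:compute-min-weight-path} still applies to $G \smallsetminus \{e\}$, and the reachability test is in $\NL \subseteq \CLP$.
\end{enumerate}
I would then output the first edge satisfying both (i) and (ii).

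Each iteration performs $O(1)$ invocations of Lemma~\ref{lem:compute-min-weight-path} plus one reachability test; each subroutine restores the catalytic tape upon completion, so sequentially composing them over all $|E|$ edges keeps the whole procedure in $\CLP$ and polynomial time. Correctness is immediate from the hypothesis: given two distinct minimum-weight $s$-$t$ paths $P_1 \neq P_2$, any edge $e^* \in P_1 \smallsetminus P_2$ is a threshold edge (with $P_1$ witnessing (i) and $P_2$ witnessing (ii)), so the loop is guaranteed to output such an edge.

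I do not expect a real obstacle here---the main observation is that we only ever need \emph{weights} of min-weight paths, which Lemma~\ref{lem:compute-min-weight-path} already provides, and never the paths themselves, which would be harder to store and manipulate in small free space. The one piece of bookkeeping worth noting is that the assumption of no non-positive cycles is preserved under edge deletion, which is what lets us reuse the same shortest-path oracle on the modified graph $G \smallsetminus \{e\}$ inside condition (ii).
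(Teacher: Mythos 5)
Your proof is correct and follows essentially the same approach as the paper: loop over all edges and, for each $e = (u,v)$, test whether $d_G(s,u) + W(e) + d_G(v,t) = d_G(s,t)$ and whether $d_{G \smallsetminus \{e\}}(s,t) = d_G(s,t)$, both via Lemma~\ref{lem:compute-min-weight-path}. The only differences are cosmetic --- you spell out the explicit reachability check (which the paper leaves implicit in the convention that $\delta'_{s,t}$ is infinite when $t$ is unreachable) and you give a slightly more detailed correctness argument via a witnessing pair $P_1, P_2$.
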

\begin{proof}
    For any two vertices $u,v \in G$, let $\delta_{u,v}$ be the minimum weight of any
    $u$-$v$ path. For each edge $e = u \rightarrow v \in E$, $e$ is a threshold
    edge iff 1) $\delta_{s,u} + W(e) + \delta_{v,t} = \delta_{s,t}$, i.e. $e$ lies
    on at least one min-weight $s$-$t$ path; and
    2) $\delta_{s,t}' = \delta_{s,t}$, where $\delta_{s,t}'$ is the min-weight
    $s$-$t$ path in $G \smallsetminus \{e\}$, i.e. there exists some min-weight
    $s$-$t$ path which does not use $e$.
    All these tests can compute in $\CLP$ using Lemma~\ref{lem:compute-min-weight-path},
    and so we loop over all $e$ in order until we find a threshold edge.
    %
    %
    %
\end{proof}

\noindent
This finally brings us to our recursive procedure, which allows us to
go from isolating matchings of size $k$ to matchings of size $k+1$.

\begin{lemma}
\label{lem:check-whether-k+1-is-isolated}
    Let $G = (V = L \cup R, E)$ be a bipartite graph, let $W:E \rightarrow \mathbb{Z}$ be
    a weight assignment, and let $k \in [n]$ such that $W$ isolates a size
    $k$ matching $M^k_{\iso}$.
    Then there exists a $\CLP$ machine $A$ which, on input $(G,k,W)$,
    performs the following:
    \begin{enumerate}
        \item if no matching of size $k+1$ exists, $A$ outputs $\perp$
        \item if $W$ isolates a matching of size $k+1$, $A$ outputs $1$
        \item if $W$ does not isolate a matching of size $k+1$, $A$ outputs a threshold
        edge $e$ in $G_{\res}$ with the further promise that $e \notin M^k_{\iso}$
    \end{enumerate}
    %
\end{lemma}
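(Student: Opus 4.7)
The plan is to combine the preceding lemmas with one new structural observation about $G_{\res}$. First, I would use \Cref{lem:isolated-km} to compute $M^k_{\iso}$ in $\CLP$ and then construct the residual graph $G_{\res} = G^{M^k_{\iso}}_{\res}$ together with its weight function $W_{\res}$. A fact implicit in the proof of \Cref{lem:isolation-equivalence} is that $G_{\res}$ has no non-positive weight cycle under $W_{\res}$: any such cycle $C$ would give a distinct matching $M^k_{\iso} \oplus C$ of size $k$ and weight at most $W(M^k_{\iso})$, contradicting the isolation of $M^k_{\iso}$. This ensures that the hypotheses of \Cref{lem:compute-min-weight-path} are satisfied on $G_{\res}$.

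Next, I would test whether $t$ is reachable from $s$ in $G_{\res}$ using $\NL \subseteq \CLP$. By \Cref{lem:check-if-matching-is-maximum}, if $t$ is unreachable then no matching of size $k+1$ exists and I would output $\perp$. Otherwise, by \Cref{lem:isolation-equivalence} the question of whether $W$ isolates a size $k+1$ matching is equivalent to whether $W_{\res}$ isolates a minimum-weight $s$-$t$ path in $G_{\res}$, and a threshold edge of $G_{\res}$ provides a witness of non-isolation.

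The core step is the structural claim that whenever a threshold edge exists, some threshold edge can be chosen to be a forward edge of $G_{\res}$ (an edge of the form $u \to v$ with $(u,v) \in E \smallsetminus M^k_{\iso}$), and in particular one not in $M^k_{\iso}$. The key observation is that every vertex $v \in R$ has a unique outgoing edge in $G_{\res}$ --- either $v \to t$ if $v$ is unmatched, or $v \to u'$ for the unique $u'$ with $(u', v) \in M^k_{\iso}$. Consequently, if $P_1, P_2$ are two distinct minimum-weight $s$-$t$ paths, their first point of divergence must occur at $s$ or at some $u \in L$. In the latter case the two diverging edges $u \to v_1, u \to v_2$ are themselves forward edges with $v_1 \neq v_2$, each a threshold edge outside $M^k_{\iso}$. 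In the former case, $P_1, P_2$ exit $s$ to distinct unmatched vertices $u_1, u_2 \in L$; since $u_1$ is unmatched its only incoming edge in $G_{\res}$ is $s \to u_1$, so $P_2$ never visits $u_1$, making the next forward edge $u_1 \to v_1$ of $P_1$ a threshold edge outside $M^k_{\iso}$.

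Finally I would imitate \Cref{lem:compute-threshold-edge}, but restrict the enumeration to forward edges $e = u \to v$ with $(u,v) \in E \smallsetminus M^k_{\iso}$, testing each using \Cref{lem:compute-min-weight-path} to see whether $\delta_{s,u} + W_{\res}(e) + \delta_{v,t} = \delta_{s,t}$ and whether the minimum $s$-$t$ distance in $G_{\res} \smallsetminus \{e\}$ still equals $\delta_{s,t}$. If some $e$ passes both tests, output $e$; if the enumeration completes without finding one, the contrapositive of the structural claim implies no threshold edge exists in $G_{\res}$, so $W_{\res}$ isolates a min-weight $s$-$t$ path and the algorithm outputs $1$. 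Every step is a composition of earlier $\CLP$ subroutines, so the whole procedure runs in $\CLP$. The main obstacle is the structural claim that threshold edges can be found outside $M^k_{\iso}$; the rest is straightforward bookkeeping built on the lemmas already proved.
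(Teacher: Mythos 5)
Your proof is correct and follows the same high-level plan as the paper: compute $M^k_{\iso}$ and $G_{\res}$, test reachability to detect $\perp$, and search for a threshold edge (restricted to those outside $M^k_{\iso}$), using Lemma~\ref{lem:isolation-equivalence} to interpret the outcome. The one place where you depart from the paper is the key structural claim that a threshold edge, if one exists, can be taken outside $M^k_{\iso}$. The paper argues this by contradiction: if all threshold edges lay in $M^k_{\iso}$, then every min-weight path would share the same set of non-matching edges and, since each vertex meets at most one matching edge, the matching edges would also be forced, so all min-weight paths would coincide. You instead give a direct construction: since every vertex of $R$ has out-degree one in $G_{\res}$, two distinct min-weight $s$-$t$ paths must first diverge at $s$ or at some $u \in L$, and in either case you exhibit an explicit forward edge that lies on one path but not the other. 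Your argument is more constructive and makes the restricted enumeration of \Cref{lem:compute-threshold-edge} to forward edges cleanly self-justifying (the contrapositive of your claim shows that if no forward threshold edge is found, none exists at all), whereas the paper's argument is more succinct and purely existential. You also helpfully spell out, as a side remark, that $G_{\res}$ has no non-positive cycles under $W_{\res}$ — a precondition for \Cref{lem:compute-min-weight-path} that the paper establishes inside the proof of \Cref{lem:isolation-equivalence} but does not restate here. Both approaches are sound; yours is a bit more explicit where the paper is terse.
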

\begin{proof}
    In $\CLP$ we can compute $M_{\iso}^k$ by Lemma~\ref{lem:isolated-km}, and thus
    we can construct the graph $G_{\res}$.
    By Lemma~\ref{lem:check-if-matching-is-maximum}, if no matching of size $k+1$
    exists, we can determine this in $\CLP$.
    Otherwise, we run the algorithm of Lemma~\ref{lem:compute-threshold-edge}
    to see if any threshold edge exists, and if not then $W$ must isolate
    a matching of size $k+1$ by Lemma~\ref{lem:isolation-equivalence}. \\

    \noindent    
    Finally, if any threshold edge exists, then there must exist one
    outside $M^k_{\iso}$. If not, then every min-weight path
    contains the same set of edges outside $M^k_{\iso}$,
    and since each vertex is only adjacent to at most one edge in
    $M^k_{\iso}$ every min-weight path must also
    contain the same set of edges within
    $M^k_{\iso}$. Thus every min-weight matching of size $k+1$
    is the same, which is a contradiction.
\end{proof}


\subsection{Compressing Isolating Edges}

Since it is clear how to proceed in the first two cases of
Lemma~\ref{lem:check-whether-k+1-is-isolated},
we now turn to the third case, when we
obtain a threshold edge. The key observation is that $W(e)$
can be determined via the rest of the weight function.

\begin{lemma}
\label{lem:decompression-threshold-edge}
    Let $G = (V, E)$ be a directed graph, $s \in V$ be a source vertex and $t \in V$
    be a target vertex, and $W: E \rightarrow \mathbb{Z}_{\leq \poly(n)}$
    be edge weights such that all cycles in $G$ have strictly positive weight under $W$.
    Let $e = u \rightarrow v$ be a threshold edge in this graph.
    Then there exists a $\CLP$ machine which, given
    $(G, e, W \restriction_{E \smallsetminus \{e\}})$, computes $W(e)$.
\end{lemma}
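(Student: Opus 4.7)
The plan is to exploit the defining property of a threshold edge to derive $W(e)$ from distances that do not themselves depend on $W(e)$. Since $e$ is a threshold edge, there exist two min-weight $s$-$t$ paths $P_1$, $P_2$ with $e \in P_1$ and $e \notin P_2$, and both have weight $\delta_{s,t}$ (the min $s$-$t$ distance in $G$). Writing $P_1$ as a prefix from $s$ to $u$, the edge $e = u \to v$, and a suffix from $v$ to $t$, the prefix and suffix must themselves be min-weight paths, so
\[
W(e) \;=\; \delta_{s,t} - \delta_{s,u} - \delta_{v,t},
\]
where all three quantities are min-weights of simple paths in $G$.

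The first step is to argue that each of these three quantities equals the corresponding min-weight simple path in $G \smallsetminus \{e\}$, so that they can be computed without knowing $W(e)$. For $\delta_{s,u}$, any simple $s$-$u$ path ends at $u$ and therefore cannot traverse the outgoing edge $e = u \to v$; an analogous argument for incoming paths handles $\delta_{v,t}$. For $\delta_{s,t}$, the existence of the path $P_2$ (guaranteed by $e$ being a threshold edge) shows that the min $s$-$t$ weight is achieved by a path avoiding $e$, so $\delta_{s,t}$ in $G$ equals $\delta_{s,t}$ in $G \smallsetminus \{e\}$.

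Given this, the algorithm is immediate: construct $G' := G \smallsetminus \{e\}$ (which still has no non-positive cycles, since removing an edge cannot create one), and apply Lemma~\ref{lem:compute-min-weight-path} three times in $\CLP$ to compute $\delta_{s,t}$, $\delta_{s,u}$, and $\delta_{v,t}$ in $G'$ under $W\restriction_{E \smallsetminus \{e\}}$. Output $\delta_{s,t} - \delta_{s,u} - \delta_{v,t}$. Correctness follows from the displayed identity, and each of these distances is well-defined because $e$ lying on the simple min-weight path $P_1$ witnesses both an $s$-$u$ and a $v$-$t$ path in $G'$.

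The main subtlety to check is that Lemma~\ref{lem:compute-min-weight-path} is genuinely applicable here, which requires no non-positive cycles in $G'$; this is inherited from $G$ by assumption. A small sanity point is that the distances returned by Lemma~\ref{lem:compute-min-weight-path} are minimum weights of simple paths, which matches what we need by the prefix/suffix simplicity of $P_1$ (since positive cycles force min-weight walks to be simple). No other obstacles arise: the entire computation is just three invocations of a weighted shortest-path subroutine that is already known to lie in $\CLP$.
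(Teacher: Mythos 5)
Your proposal is correct and follows essentially the same route as the paper: derive the identity $W(e) = \delta_{s,t} - \delta_{s,u} - \delta_{v,t}$ from $e$ lying on a min-weight $s$-$t$ path, observe that the second witness path $P_2$ makes $\delta_{s,t}$ equal to its value in $G\smallsetminus\{e\}$, note that $\delta_{s,u}$ and $\delta_{v,t}$ likewise do not depend on $W(e)$ (since min-weight walks are simple when all cycles are positive, and a simple path ending at $u$ or starting at $v$ cannot use $u\to v$), and then invoke Lemma~\ref{lem:compute-min-weight-path} three times on $G\smallsetminus\{e\}$. You spell out the simplicity argument a bit more explicitly than the paper does, but the decomposition, the key identity, and the invocation of the weighted-reachability subroutine are identical.
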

\begin{proof}
    This algorithm is essentially the same as Lemma~\ref{lem:compute-threshold-edge}.
    To recap, we again define, for any two vertices $u,v \in G$, the value
    $\delta_{u,v}$ be the minimum weight of any
    $u$-$v$ path. Since $e = u \rightarrow v \in E$ is a threshold
    edge, it follows that $\delta_{s,t} = \delta_{s,u} + W(e) + \delta_{v,t}$,
    and furthermore that $\delta_{s,t} = \delta_{s,t}'$ where $\delta_{s,t}'$ is
    the min-weight $s$-$t$ path in $G \smallsetminus \{e\}$. Putting these two
    facts together we get that
    $$W(e) = \delta_{s,t}' - (\delta_{s,u} + \delta_{v,t})$$
    and we can calculate all three quantities in $\CLP$ using
    Lemma~\ref{lem:compute-min-weight-path}. Furthermore, none of
    these quantities involve $W(e)$; this is true for $\delta_{s,t}'$
    by definition, while the other two hold because all cycles have
    positive weight and thus every min-weight path from $s$ to $u$
    (or $v$ to $t$) does not involve $u \rightarrow v$.
\end{proof}

\noindent
This gives rise to our compression and decompression subroutines, which allow
us to erase and later recover $W(e)$ from the catalytic tape.

\begin{lemma}
\label{lem:decompression}
    Let $G = (V = L \cup R, E)$ be a bipartite graph, and let $\tau$ be a string
    of length $\poly(n)$ such that we interpret the initial substring of $\tau$
    as a weight assignment $W:E \rightarrow \mathbb{Z}_{\leq n^5}$ which isolates a matching
    of size $k$ but does not isolate a matching of size $k+1$. We interpret an additional $5 \log n$ bits on the catalytic tape as a reserve weight $r \in \mathbb{Z}_{\leq n^5}$. \\

    \noindent
    Then there exist catalytic subroutines $\Comp, \Decomp$ with the following behavior:
    \begin{itemize}
        \item $\Comp(G,k)$ replaces $(W,r)$ with $(W', k, e, 0^{2 \log n})$ for some edge $e$, where $W'(e') = W(e')$ for all $e' \neq e$ and $W'(e) = r$, and leaves all other catalytic memory unchanged
        \item $\Decomp(G,k,u,v)$ inverts $\Comp$
    \end{itemize}    
\end{lemma}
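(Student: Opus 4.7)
The plan is to leverage Lemma~\ref{lem:check-whether-k+1-is-isolated} (to find a threshold edge) together with Lemma~\ref{lem:decompression-threshold-edge} (which says the weight of a threshold edge is recoverable from the rest of the weight function). Together these let us implement $\Comp$ and $\Decomp$ as a catalytic edit which swaps $W(e)$ out for $r$ on the tape and repurposes the $5 \log n$-bit reserve slot to hold the bookkeeping triple $(k, e, 0^{2 \log n})$, using $\log n$ bits for $k$, $2 \log n$ bits for the endpoints of $e$, and $2 \log n$ reclaimed zero-bits.

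For $\Comp(G, k)$ I would first read $W$ off the tape and run the $\CLP$ algorithm of Lemma~\ref{lem:check-whether-k+1-is-isolated} to obtain a threshold edge $e = (u \to v)$ in $G_{\res}$ with $e \notin M^k_{\iso}$; such an edge exists because we are in the case that $W$ fails to isolate a size-$(k+1)$ matching. Next I read $r$ from its reserve slot into workspace, overwrite the $5\log n$ bits storing $W(e)$ with $r$, and overwrite the reserve slot with $(k,u,v,0^{2\log n})$. The net effect is exactly the transformation $(W, r) \mapsto (W', k, e, 0^{2\log n})$ required by the lemma.

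For $\Decomp(G, k, u, v)$ I need to invert this edit, i.e.\ recover the original $W(e)$ from the post-$\Comp$ tape. The main obstacle is that $W'$ may no longer isolate $M^k_{\iso}$: setting $W'(e) = r$ for an adversarial reserve $r$ could create a tie among size-$k$ matchings containing $e$, so I cannot directly apply Lemma~\ref{lem:isolated-km} to $(G, W', k)$. The way around this is to observe that since $e \notin M^k_{\iso}$, the matching $M^k_{\iso}$ lies entirely in $G \smallsetminus \{e\}$ with the same weight under $W$ and $W'$, while any other size-$k$ matching in $G \smallsetminus \{e\}$ is also a size-$k$ matching in $G$ of strictly larger $W$-weight by the uniqueness of $M^k_{\iso}$ under $W$. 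Hence $W' \restriction_{E \smallsetminus \{e\}}$ isolates $M^k_{\iso}$ inside $G \smallsetminus \{e\}$, and so I can recover $M^k_{\iso}$ in $\CLP$ by invoking Lemma~\ref{lem:isolated-km} on this restricted instance.

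With $M^k_{\iso}$ in hand I can build $G_{\res}$ along with its residual weights on all edges other than $e$ (which only depend on $W' \restriction_{E \smallsetminus \{e\}}$ and $M^k_{\iso}$), and apply Lemma~\ref{lem:decompression-threshold-edge} to compute $W(e) = \delta'_{s,t} - \delta_{s,u} - \delta_{v,t}$. Finally I read $r = W'(e)$ off the tape, write $W(e)$ back into that slot, and overwrite the bookkeeping block with $r$, which returns the tape to its pre-$\Comp$ state and completes the inversion.
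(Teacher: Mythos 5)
Your proposal is correct and follows essentially the same approach as the paper: find a threshold edge $e \notin M^k_{\iso}$ via Lemma~\ref{lem:check-whether-k+1-is-isolated}, swap $W(e)$ for $r$ and record $(k,e,0^{2\log n})$ in the freed reserve slot, then in $\Decomp$ recover $M^k_{\iso}$ by running Lemma~\ref{lem:isolated-km} on $G \smallsetminus \{e\}$ and reconstruct $W(e)$ via Lemma~\ref{lem:decompression-threshold-edge}. You are in fact slightly more explicit than the paper in spelling out why $W \restriction_{E \smallsetminus \{e\}}$ still isolates $M^k_{\iso}$ in $G \smallsetminus \{e\}$ (the paper leaves this implicit under the phrase ``since $e \notin M_{\iso}^k$''), which is a welcome clarification but not a different argument.
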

\begin{proof}
    By Lemma~\ref{lem:check-whether-k+1-is-isolated}, $\Comp$ can use the
    rest of its catalytic tape to find a threshold edge $e = u \rightarrow v$
    which is outside of $G_{\iso}^k$, given that such
    an edge must exist by assumption. Now $\Comp$ will erase $W(e)$ from the catalytic
    tape and replace it by $r$; we then erase the original copy of $r$ and record
    $k$ and the indices $u, v$ of $e$ in it.
    These indices take $\log n$ bits each and $k \leq n$
    requires $\log n$ bits, while $W(e)$ takes $5 \log n$ bits on the catalytic tape,
    which gives us $2 \log n$ free bits as a result of this procedure. \\

    \noindent    
    For $\Decomp$, since $e \notin M_{\iso}^k$, we can determine $M_{\iso}^k$
    by Lemma~\ref{lem:isolated-km} given $(G \smallsetminus \{e\},k,W \smallsetminus \{W(e)\})$,
    and from this we can construct $G_{\res}$;
    thus we can apply Lemma~\ref{lem:decompression-threshold-edge}
    to recover the value $W(e)$. We then erase $k$ and $e$, move the weight in
    $W'$ at location $e$ to this memory, and replace it with the recovered
    value of $W(e)$.
    %
\end{proof}

\section{Final Algorithm}
\label{sec:final-alg}

We finally collect all cases together to solve bipartite maximum matching
in $\CLP$ and in $\LOSSYNC$.

\subsection{Proof of Theorem~\ref{thm:main}}

First we show the case of $\CLP$, where our core algorithm will
be spelled out in detail.

\begin{theorem} \label{thm:main-specific}
    There exists a $\CLP$ algorithm which, given bipartite graph $G$
    as input, outputs the maximum matching in $G$.
\end{theorem}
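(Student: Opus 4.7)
The plan is to realize the $\CLP$ algorithm sketched in Section~\ref{sec:intro} by using the catalytic tape both as working memory and as the source of the weight function. We partition the $\poly(n)$-length catalytic tape into three disjoint regions: a small scratch region needed to invoke the subroutines of Lemmas~\ref{lem:isolated-km}, \ref{lem:check-whether-k+1-is-isolated}, and \ref{lem:decompression}; a weight region that we interpret as $W : E \to \mathbb{Z}_{\leq n^5}$, using $5 \log n$ bits per edge; and a reserve region holding $\poly(n)$ additional $5 \log n$-bit strings $r_1, r_2, \ldots$ serving as ``fresh'' weight values. The contents of all three regions are supplied by the adversarial initial catalytic string $\tau$, so $W$ and the $r_i$ should be treated as arbitrary.

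The main loop maintains an index $k$, initialized to $k = 0$ (where $W$ trivially isolates the empty matching), together with a LIFO stack of compression records kept in the freed portion of the tape. At each iteration we apply Lemma~\ref{lem:check-whether-k+1-is-isolated} to $(G, k, W)$. If it outputs $1$, we set $k \gets k + 1$ and continue. If it outputs $\perp$, then by Fact~\ref{fact:augmented-matching} the current $M^k_{\iso}$ (computable via Lemma~\ref{lem:isolated-km}) is a maximum matching; we copy it to the output tape and break. Otherwise it returns a threshold edge $e \notin M^k_{\iso}$; we invoke $\Comp$ from Lemma~\ref{lem:decompression} to replace $W(e)$ with the next unused reserve weight $r_i$, freeing $2 \log n$ bits in which we record the pair $(k, e)$ and push this entry onto the stack. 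We then reset $k \gets 0$ and restart the loop with the updated weight function $W'$.

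Since every compression consumes one reserve weight and frees $2 \log n$ catalytic bits, and we provision $\poly(n)$ reserve weights, the loop restarts at most $\poly(n)$ times; each inner pass runs for at most $n$ increments of $k$, each requiring $\poly(n)$ time. If the cumulative freed space ever reaches $\Omega(n^2 \log n)$ bits before the loop terminates naturally, we exit early and invoke any polynomial-time matching algorithm in the freed region as a safety fallback. In either case a maximum matching is written to the output in polynomial total time. For cleanup, we pop compression records off the stack and invoke $\Decomp$ from Lemma~\ref{lem:decompression} on each in reverse order; because $\Decomp$ exactly inverts $\Comp$ against the tape state it saw, emptying the stack fully restores the catalytic tape to its original contents $\tau$, as required by the catalytic model.

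The main subtlety is that each $\Comp$ call is made with respect to the current weight function $W'$, not the original $W$, so the correctness of $\Decomp$ relies on encountering the same tape state at decompression time. LIFO ordering handles this automatically. A smaller but essential point is that the threshold edge returned by Lemma~\ref{lem:check-whether-k+1-is-isolated} is guaranteed to lie outside $M^k_{\iso}$, without which Lemma~\ref{lem:decompression-threshold-edge} could not reconstruct $W(e)$ without circular dependence on the very weight being recovered. Together these observations yield a deterministic $\CLP$ algorithm that computes the maximum matching and leaves the catalytic tape untouched, completing the proof.
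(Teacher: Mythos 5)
Your proposal is essentially the same algorithm as the paper's proof: partition the catalytic tape into a weight region, a reserve-weight region, and scratch space; increment $k$ while $W$ isolates; on a threshold edge, run $\Comp$ against a fresh reserve weight, record $(k,e)$ in the freed bits, and restart at $k=0$; fall back to a direct polynomial-time algorithm once enough space is freed; and at the end run $\Decomp$ in LIFO order to restore $\tau$. You also correctly flag the two essential subtleties --- that $\Decomp$ must be applied in strict reverse order against the tape states it actually saw, and that the threshold edge must lie outside $M^k_{\iso}$ to avoid circularity in reconstructing $W(e)$.

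The one place your write-up is looser than the paper is the provisioning of reserve weights. You say ``we provision $\poly(n)$ reserve weights, the loop restarts at most $\poly(n)$ times'' and separately describe a fallback ``if the cumulative freed space ever reaches $\Omega(n^2\log n)$ bits.'' These two facts need to be explicitly tied together: the number of reserve weights must be chosen so that exhausting all of them is guaranteed to have freed at least the space required by the brute-force algorithm (the paper takes $T/(2\log n)$ reserves, where $T = O(|E|\sqrt{|V|})$ is the Hopcroft--Karp time bound, so that using them all frees $T$ bits). Without pinning that down, there is a logical possibility of running out of reserve weights while still short of the fallback threshold, at which point the algorithm has no move. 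This is easily repaired and does not affect the correctness of the overall approach, but it should be stated.
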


\begin{proof}
    By \cite{HopcroftKarp}, $\MATCH$ can be solved in
    time $T := O(|E| \sqrt{|V|})$.
    Our $\CLP$ algorithm will have three sections of catalytic tape:
    \begin{enumerate}
        \item a weight assignment $W: E \rightarrow \mathbb{Z}_{\leq n^5}$
        \item a set of reserve weights $r_1 \ldots r_{T/(2 \log n)}$, each of size $5 \log n$
        \item a large enough $\poly(n)$ catalytic space to run all $\CLP$ subroutines
        as needed
    \end{enumerate}    
    We set two loop counters $c$ and $k$, both initialized to 0; $c$ will
    record how many times we have compressed a weight value, while $k$ will
    record the largest isolated min-weight matching found by $W$ in the current
    iteration. Our basic loop for the current $c$ and $k$ will be to
    apply Lemma~\ref{lem:check-whether-k+1-is-isolated} for the current $(G,k,W)$
    and perform as follows:
    \begin{enumerate}
        \item if it returns $\perp$, record $M_{\iso}^k$ on the output tape
        and move to the decompression procedure (see below).
        \item if it returns 1, increment $k$ and repeat.
        \item if it returns an edge $e$, increment $c$, apply the $\Comp$ subroutine of
        Lemma~\ref{lem:decompression} using reserve weight $r_c$ as $r$,
        and restart our algorithm for $k = 0$ with our new weight function $W'$
    \end{enumerate}
    If our algorithm ever reaches the first case, we have successfully computed
    the maximum matching in $G$. This occurs unless we reach $c = T/(2 \log n)$,
    and in this case we are left with $T$ free bits on our catalytic tape,
    as each application of $\Comp$ frees $2 \log n$ bits.
    We then apply our time $T$ algorithm to solve $\MATCH$ directly.
    We record our answer on the output tape and move to the decompression procedure. \\

    \noindent
    To decompress the tape, we apply the $\Decomp$ procedure of
    Lemma~\ref{lem:decompression} in reverse order, starting from our final
    value of $c$ and decrementing until we reach 0.
    By the correctness of $\Decomp$ each iteration will reset the catalytic
    tape to its state just before the $c$th run of the algorithm, meaning
    that our final state is the original catalytic tape $\tau$, at which point
    we return the answer saved on our work tape.\\

    \noindent
    We briefly analyze our resource usage. Our catalytic tape has length
    $$ \left({n\choose 2} + T/(2 \log n) \right) \cdot (5 \log n) + \poly(n) = \poly(n) $$
    Our work tape will need to store loop variables $k \leq n$ and $c \leq T/(2 \log n)$,
    plus free space to run our $\CLP$ subroutines, which is $O(\log n)$ in total.\\

    \noindent
    All subroutines are $\CLP$ machines and thus run in polynomial time,
    while our loops for $k$ and $c$ run in time $n$ and $T/(2 \log n)$ respectively,
    and the decompression procedure again only uses $\CLP$ subroutines and runs
    for $c$ steps.
    Finally if we reach the maximum value of $c$, we ultimately run the
    time $T$ algorithm, which again take polynomial time. Thus our whole machine
    runs in polynomial time, logarithmic free space, and polynomial catalytic space,
    which is altogether a $\CLP$ algorithm.
\end{proof}

\begin{remark}
    Putting aside our runtime analysis and care with regards to using $\CLP$ rather
    than $\CL$ subroutines,
    it is also known due to Cook et al.~\cite{CookLiMertzPyne25}---in
    fact, by the same argument structure that we use here---that any problem in
    $\CL \cap \P$ can be solved in poly-time bounded $\CL$ generically.
\end{remark}

\subsection{Proof of Theorem~\ref{thm:lossy}}
We now prove our second theorem using the above algorithm; since
most of the details are analogous we opt to be somewhat succinct
in our proof.

\begin{theorem} \label{thm:lossy-specific}
There exist $\NC$ algorithms $\mathcal{A}_1$, $\mathcal{A}_2$ which, given a bipartite graph $G = (V, E)$ with as input, have the following behaviour:
\begin{enumerate}
    \item $\mathcal{A}_1$ outputs $\NC$ circuits $\Comp:\{0, 1\}^{f(n)} \rightarrow \{0, 1\}^{f(n) - 1}$ and $\Decomp:\{0, 1\}^{f(n) - 1} \rightarrow \{0, 1\}^{f(n)}$ such that both $\Comp$ and $\Decomp$ have depth $\polylog(n)$ and size $\poly(n)$. Furthermore, $f(n) \in \poly(n)$. 
    \item Given any $x \in \{0, 1\}^{f(n)}$ such that $\Decomp(\Comp(x)) \neq x$,
    $\mathcal{A}_2$ outputs a maximum matching of $G$.
\end{enumerate}
\end{theorem}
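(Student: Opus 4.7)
The plan is to follow the blueprint of Theorem~\ref{thm:main-specific}, replacing iterated compression with a single compression step and letting the pigeonhole principle built into the lossy coding framework play the role of the iterative compress-or-random loop. The crucial observation is that every subroutine used in Sections~\ref{sec:find-match}--\ref{sec:compress} has a known $\NC$ implementation: determinant (Lemma~\ref{lem:DET-in-cl}) sits in $\NCt$, and min-weight $s$-$t$ path over polynomially bounded edge weights (Lemma~\ref{lem:compute-min-weight-path}) sits in $\NC$ via iterated min-plus matrix powering. Consequently, Lemmas~\ref{lem:isolated-km}, \ref{lem:check-whether-k+1-is-isolated}, \ref{lem:compute-threshold-edge}, and \ref{lem:decompression-threshold-edge} can each be implemented by polynomial-size, polylog-depth circuits, which is exactly what Theorem~\ref{thm:lossy-specific} requires.

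For $\mathcal{A}_1$: set $f(n) = \binom{n}{2} \cdot 5\log n + 5\log n + 1$ and interpret the input $x$ to $\Comp$ as a triple $(W, r, b)$, where $W$ is a weight function on $E$ with each $W(e) \in \mathbb{Z}_{\leq n^5}$, $r \in \mathbb{Z}_{\leq n^5}$ is a single reserve weight, and $b \in \{0,1\}$ is a flag bit. Running in parallel over all $k \in \{0, \ldots, n-1\}$, $\Comp$ uses Lemma~\ref{lem:check-whether-k+1-is-isolated} to detect the smallest $k^*$ at which $W$ isolates at level $k^*$ but fails to isolate at level $k^*+1$. If such a $k^*$ exists, $\Comp$ sets a marker bit to $0$ and writes the compressed state $(W', k^*, e^*)$ prescribed by Lemma~\ref{lem:decompression}, dropping one bit of the $0^{2\log n}$ padding to obtain strict length reduction. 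Otherwise, $W$ isolates a minimum-weight matching at every attainable size, and $\Comp$ sets the marker bit to $1$ and appends the first $f(n) - 2$ bits of $x$, discarding $b$. The circuit $\Decomp$ reads the marker and inverts accordingly: on marker $0$ it restores the dropped padding bit and applies Lemma~\ref{lem:decompression-threshold-edge} to reinstate the erased weight; on marker $1$ it appends a fixed $0$ bit at the end.

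By Lemma~\ref{lem:decompression}, every $x$ entering the marker-$0$ branch is a fixed point of $\Decomp \circ \Comp$, and in the marker-$1$ branch $x$ is a fixed point iff $b = 0$. Hence the non-fixed points of $\Decomp \circ \Comp$ are precisely those $x = (W, r, 1)$ for which $W$ isolates a minimum-weight matching at every size $k$ attained by $G$. Given such an $x$, $\mathcal{A}_2$ determines the maximum matching size $k^\star$ by computing, in parallel for every $k$, the determinant of the padded Edmonds matrix from the proof of Lemma~\ref{lem:isolated-km} (nonzero iff a size-$k$ matching exists); it takes $k^\star$ to be the largest $k$ producing a nonzero determinant, and then outputs $M_{\iso}^{k^\star}$ via Lemma~\ref{lem:isolated-km}. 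The whole computation is in $\NC$ since both determinant and $\max$ over polynomially many results are in $\NC$.

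The main subtlety is ensuring that the compression in Lemma~\ref{lem:decompression} admits a bit-exact $\NC$ inverse, so that the fixed-point structure of $\Decomp \circ \Comp$ matches the case split above; this is precisely the content of Lemma~\ref{lem:decompression-threshold-edge}, which recovers the erased weight $W(e^*)$ as $\delta'_{s,t} - \delta_{s,u} - \delta_{v,t}$ using three parallel shortest-path computations. A secondary concern is that locating the smallest failing $k^*$ in parallel stays inside $\NC$, but this follows because each of the $n$ isolation checks is $\NC$ and taking a minimum over $n$ Boolean outcomes costs only logarithmic additional depth, so composing a constant number of $\NC$ subroutines keeps both $\Comp$ and $\Decomp$ within the circuit-size and depth bounds promised by the theorem.
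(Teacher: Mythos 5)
Your high-level plan matches the paper's: run the isolation checks in parallel over all $k$, locate the first failing $k^*$ with its threshold edge $e^*$, and compress $W$ by overwriting the $W(e^*)$ slot with $(k^*, e^*)$, using Lemma~\ref{lem:decompression-threshold-edge} to reconstruct $W(e^*)$ in $\Decomp$. Your $\mathcal{A}_2$ is also the same (parallel Lemma~\ref{lem:isolated-km} over all $k$, take the largest).

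However, the extra machinery you carry over from the $\CLP$ algorithm --- the reserve weight $r$, the flag bit $b$, and the marker bit --- is not needed in the one-shot lossy setting, and the flag bit introduces a genuine bug. You claim that ``every $x$ entering the marker-$0$ branch is a fixed point of $\Decomp \circ \Comp$,'' but this is false: $\Comp$ depends only on $W$ (not on $b$), and in the marker-$0$ branch your $\Comp$ writes $[0, W', k^*, e^*, 0^{2\log n - 1}]$ with $b$ nowhere stored. Hence $\Comp(W,r,0) = \Comp(W,r,1)$, and $\Decomp$ of this string can equal at most one of $(W,r,0)$ and $(W,r,1)$ --- so for every non-isolating $W$ there is a marker-$0$ input that is \emph{not} a fixed point. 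Your $\mathcal{A}_2$ would then receive some $x = (W,r,b)$ with non-isolating $W$ and produce garbage, violating the theorem's guarantee that $\mathcal{A}_2$ works on \emph{any} non-fixed point. (Your marker-$1$ accounting is also off by a bit: dropping two bits and appending one fixed bit leaves a length mismatch, and even once fixed the characterization ``fixed point iff $b=0$'' ignores the other discarded bit.) The paper sidesteps all of this by not introducing $r$ or $b$ at all: $\Comp$ simply replaces the $(3\log n + 1)$-bit slot for $W(e^*)$ with the $3\log n$-bit pair $(k^*,e^*)$, strictly shrinking by one bit, and outputs the all-zeros string in the isolating case; then any non-fixed point automatically has an isolating $W$, since non-isolating $W$'s are exactly recovered by $\Decomp$. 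The easiest repair of your proof is to drop $r$ and $b$ and follow that cleaner encoding.
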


\begin{proof}
Our proof follows from Theorem~\ref{thm:main-specific}, and in particular
Lemma~\ref{lem:decompression}, under a different setup.
In particular, rather than using $\CLP$ subroutines and a weight function from the
catalytic tape, we will use $\NC$ subroutines and a weight function as given
by the input. Our outer $c$ loop is unnecessary as we only need to compress
once, while the inner $k$ loop can be parallelized to keep our algorithm in low depth.\\

\noindent
We now move to the details of how to construct our algorithms and circuits.
We first describe the $\NC$ algorithm which corresponds to $\Comp$:
\begin{enumerate}
    \item We interpret the input of circuit $C$ as the graph $G$ along with edge weights $W:E \rightarrow Z_{\leq \poly(n)}$ with $3\log n + 1$ bits per weight; thus $f(n) = |E| \cdot (3 \log n + 1) = \poly(n)$.
    \item For all $k \in [0, n]$ in parallel, use \Cref{lem:isolated-km} to compute a matching of size $k$, which succeeds if one is isolated by $W$. 
    \item For all $k \in [0, n]$ such that \Cref{lem:isolated-km} gives a size $k$ matching, use \Cref{lem:check-whether-k+1-is-isolated} to check whether a matching of size $k+1$ is isolated. 
    \item Let $k^*$ be the minimum $k$ for which \Cref{lem:isolated-km} returns a size $k$ matching of $G$, but \Cref{lem:check-whether-k+1-is-isolated} returns a threshold edge $e^*$. If no such $k^*$ exists, it implies that $W$ isolates a maximum matching of $G$. In this case $C$ is not required to compress accurately, so it can simply output the all $0$s string $0^{f(n) - 1}$.
    \item For all $0 \leq k \leq k^*$, we are guaranteed that $W$ isolates a size $k$ matching in $G$. We know that $W$ does not isolate a size $k^* + 1$ matching and, from the previous step, we have a threshold edge $e^*$.
    \item $\Comp$ now outputs the original weight function $W$ but with the $3\log n + 1$ bits corresponding to weight $W(e^*)$ replaced with $3\log n$ bits representing $k^*$ and $e^*$, which we move to the end of the output for simplicity.
\end{enumerate}

\noindent
Now we describe the $\NC$ algorithm which corresponds to $\Decomp$, and we only
consider the case that the compression succeeds, i.e. does not output $0^{f(n)-1}$,
as the other case is irrelevant:
\begin{enumerate}
    \item Read the last $3\log(n)$ bits of our input and interpret them as $k^*$ and $e^*$ as described above. 
    \item Using \Cref{lem:isolated-km}, construct the isolated size $k^*$ matching $M^{k^*}_{\iso}$ in the graph $G \setminus \{e^*\}$.
    \item Using $M^{k^*}_{\iso}$, construct its residual graph $G_{\res}$, and then run the procedure described in \Cref{lem:decompression-threshold-edge} to obtain $W(e^*)$.
    \item Erase the $3\log n$ bits in the suffix of the input and output the weight function given as input with the recomputed $3\log n + 1$ bit weight $W(e^*)$ in its appropriate position.
\end{enumerate}

\noindent
It is very easy to verify that all of the algorithms in the referenced lemmas work in $\NC$, as they only use subroutines from $\Logspace, \NL$, and $\TCo$, and operate in parallel for all $k$. Thus, both $\Comp$ and $\Decomp$ are in uniform $\NC$, and the algorithm $\mathcal{A}_1$ simply constructs the circuits $C$ and $D$ corresponding to the uniform $\NC$ algorithms above. \\

\noindent
For $\mathcal{A}_2$ we simply observe that the algorithm $\Comp$ gives us the matching
in the case when it fails to compress, namely in the case where no $k$ outputs a
threshold edge. Thus, given such a weight function, the algorithm $\mathcal{A}_2$
can simply run over all $k \in [0, n]$ in parallel and use \Cref{lem:isolated-km}
to attempt to construct a size $k$ matching.
The largest matching $M^k_{\iso}$ for which this algorithm succeeds is guaranteed
to be the maximum matching, which it can simply output.
Thus $\mathcal{A}_2$ is also an $\NC$ algorithm by the same argument as $\Comp$. 
\end{proof}

\subsection{Postscript: a note on the isolation lemma}

To close the main section of our paper, we note an interesting feature of our approach
as discussed in the introduction. Our key observation in Section~\ref{sec:compress}
is that in the case where weights $W$ are not isolating, an edge $e$ exists which is
in \textit{some} minimum weight matchings, \textit{but not all} of them.
This is, in fact, a general feature of non-isolating weights on arbitrary families of sets.
In the original proof of \cite{MulmuleyVaziraniVazirani87}, they refer to this element
as being ``on the threshold'' (hence our use of the term ``threshold edge''),
and use it to analyse the probability of failure.\\

\noindent
In theory, the weight of a threshold element could always be erased and reconstructed later.
Thus, the approach we have described here could be used to derandomize in $\CL$
(or in $\LOSSYC$) any algorithm which employs the isolation lemma.
The bottleneck is of course designing efficient compression-decompression routines
for these problems (which correspond to the circuits $\Comp$ and $\Decomp$ in the case
of lossy coding). Our contribution is thus twofold: we observe that the weight of
a threshold element can be erased and later reconstructed,
and we design a novel approach to identify a threshold element and reconstruct its
weight in the special case of $\MATCH$.




\section{Related Problems}
\label{sec:others}

In this section we discuss the implications of our result for related problems.

\begin{corollary} \label{cor:other-qs}
    The following search problems are in $\CLP$:
    \begin{enumerate}
        \item minimum weight maximum matching with polynomially bounded weights
        \item directed $s$-$t$ maximum flow in general graphs with polynomially bounded capacities
        \item global minimum cut in general graphs
    \end{enumerate}
\end{corollary}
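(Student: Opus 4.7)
The plan is to reduce each of the three problems to bipartite maximum matching via Theorem~\ref{thm:main-specific}, using classical Logspace-computable reductions from combinatorial optimization composed with our $\CLP$ algorithm.

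For minimum weight maximum matching with input weights $w:E \rightarrow \mathbb{Z}_{\leq \poly(n)}$, I would run the $\CLP$ algorithm of Theorem~\ref{thm:main-specific} using the composite weight function $W^*(e) := n^{10} \cdot w(e) + W(e)$ in place of the catalytic weight $W$. Since $W(M) \leq n^6$ for any matching, while a difference of $1$ in total $w$-weight contributes at least $n^{10}$ to $W^*$, the lexicographic structure guarantees that every minimum $W^*$-weight matching of size $k$ is a minimum $w$-weight matching of size $k$, with ties broken by $W$. All lemmas of Sections~\ref{sec:find-match} through \ref{sec:compress} apply verbatim to $W^*$: it remains a polynomially bounded integer weight function, threshold edges can be identified via Lemma~\ref{lem:compute-threshold-edge}, and such an edge's weight can be reconstructed via Lemma~\ref{lem:decompression-threshold-edge}. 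The algorithm therefore outputs a minimum $w$-weight maximum matching.

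For directed $s$-$t$ maximum flow with integer capacities bounded by $\poly(n)$, I would chain two Logspace reductions before invoking Theorem~\ref{thm:main}. First, replace each edge of capacity $c$ with $c$ parallel unit-capacity edges, yielding a unit-capacity directed graph of polynomial size with the same maximum flow. Second, apply the classical Logspace reduction from unit-capacity directed $s$-$t$ maximum flow to bipartite maximum matching, from which an integer flow of maximum value can be extracted in Logspace. Since $\CLP$ is closed under composition with Logspace reductions, this places maximum flow in $\CLP$.

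For global minimum cut, I would fix an arbitrary vertex $s \in V$ and enumerate over each $t \in V \setminus \{s\}$ (and additionally over $(t,s)$ pairs in the directed case), computing the $s$-$t$ maximum flow using part (2). By the max-flow min-cut theorem each such value equals the minimum $s$-$t$ cut, and their minimum over all $t$ is the global minimum cut. The outer loop requires only $O(\log n)$ of free work space to store the current best value and its witnessing edges, and by the catalytic property each subroutine call restores the catalytic tape to its initial state before the next call, so the overall procedure remains in $\CLP$. The main obstacle I anticipate is verifying that the max flow to bipartite matching reduction in part (2) is both Logspace-computable and search-preserving---that a witness flow can be reconstructed in Logspace from a witness matching in the reduced graph---but this is standard folklore and should carry through.
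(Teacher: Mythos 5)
Your parts (1) and (2) are essentially sound and roughly track the paper, but part (3) has a genuine gap.

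For part (1), your composite weight $W^* = n^{10}\cdot w + W$ is the same lexicographic trick the paper uses (it uses $W_{\inp}\cdot n^{100} + W_{\cata}$); the observation that $W^*(e)$ remains reconstructible, and hence $W_{\cata}(e) = W^*(e) - n^{10}\,w(e)$ too, is what makes the compression machinery carry over. For part (2), your route (blow each capacity-$c$ edge into $c$ parallel unit edges, then reduce unit-capacity directed max flow to bipartite matching) differs from the paper's, which invokes Madry's logspace reduction from $\log$-bit max flow to $\log$-bit bipartite $b$-matching and then blows up $b$-matching to matching. Both blow up by a polynomial factor, just at different stages; your version rests on a ``classical'' reduction from unit-capacity directed flow to bipartite maximum matching that is less readily cited than Madry's, so you should be prepared to spell it out, but the overall shape is fine.

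Part (3) is where you fall short. The corollary statement deliberately drops the ``polynomially bounded weights'' qualifier for global min cut, and the paper's proof explicitly splits into two cases: $\log$-bit (i.e.\ polynomially bounded) weights, where enumerating $s$-$t$ cuts as you propose does work, and $\poly(n)$-bit weights, where it does not, because the max-flow values can be exponentially large and your part (2) subroutine no longer applies. Your proof only handles the first case. The paper handles general weights by first applying the Karger--Motwani rounding scheme (implementable in logspace via Reingold's USTCON algorithm to build a maximum spanning tree) to produce $\log$-bit weights under which the true global min cut remains a $2$-approximate min cut, and then invoking the structural theorem of Beideman, Chandrasekaran, and Wang that every $2$-approximate global min cut $(C,\bar C)$ is the unique min cut separating some pair of vertex sets $S \subseteq C$, $T \subseteq \bar C$ with $|S|,|T| \le 10$. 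One then enumerates all such constant-size pairs $(S,T)$, contracts them, runs the $\CLP$ $S$-$T$ min-cut routine on the rounded weights, and outputs the candidate cut that is minimal under the \emph{original} weights. Without this two-stage reduction your enumeration over $t$ only covers the bounded-weight case, so the argument for part (3) as stated is incomplete.
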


\begin{proof}
    We sketch how each point follows from our earlier algorithm in turn.
    
    \paragraph{Min-weight matching.}
    Let $W_{\inp}$ be the edge weights given as input, with respect to which
    we want to find a minimum weight maximum matching. We will again iteratively
    use a weighting scheme $W_{\cata}$ as given on the catalytic tape
    by using edge weights $W := W_{\inp} \cdot n^{100} + W_{\cata}$.
    As with our original algorithm, in each iteration we either
    find a minimum weight maximum matching according to $W$,
    which is clearly also a minimum weight maximum matching for weights
    $W_{\inp}$, or we find a redundant weight in $W$, which also gives
    a redundant weight in $W_{\cata}$ which we can compress.
    As before we ultimately free up $\poly(n)$ space on the catalytic tape
    and again use to run any deterministic polytime algorithm for minimum
    weight maximum matching, such as the one in \cite{Kuhn55}.
    
    \paragraph{Directed max-flow.}
    This directly follows by known reductions: Madry \cite{Madry13}
    showed a logspace reduction from $\log$-bit directed $s$-$t$ maximum
    flow to $\log$-bit bipartite $b$-matching, and there is a trivial
    reduction from $\log$-bit bipartite $b$-matching to bipartite matching.
    
    \paragraph{Global min-cut.}
    We first note that this is immediate for $\log$-bit weights, as one can compute the
    $s$-$t$ minimum cut for every $(s, t) \in V \times V$ using the
    max-flow algorithm described above and simply take the minimum. \\
    
    We now move onto the case of $\poly(n)$-bit weights.
    Karger and Motwani~\cite{KargerMotwani94} proved that the following
    algorithm converts $\poly$-bit weights to $\log$-bit weights,
    such that the global minimum cut with respect to the original weights
    remains a $2$-approximate global minimum cut with respect to the new weights:
    \begin{enumerate}
        \item Construct a maximum spanning tree $T$. Let the minimum weight of
        any edge in this tree be $w$. 
        \item For edges $e$ such that $W(e) > n^2w$, set $W'(e) = n^{10}$.
        For all other edges $e$ set $W'(e) = W(e) \cdot n^3/w$, rounded off
        to the nearest integer.
    \end{enumerate}
    Using Reingold's celebrated result~\cite{Reingold08} that undirected
    $s$-$t$ connectivity is in log-space, one can construct a
    maximum spanning tree in log-space, since edge $e = (u, v)$ is in the
    maximum spanning tree iff $u$ is not reachable from $v$ using
    edges of weight $\geq W(e)$ (for ties, we additionally filter to
    edges $e'$ with greater index $e' > e$). Using this fact, the
    aforementioned reduction is in $\CLP$. \\

    \noindent    
    Thus, we simply need to enumerate the $2$-approximate global minimum cuts of
    $G$ with respect to a $\log$-bit weight function. A recent result of
    Beideman, Chandrawekaran, and Wang~\cite{BeidemanChandrasekaranWang23} shows
    that for every such cut $(C, \bar{C})$, there exist sets $S \subseteq C$, $T \subseteq \bar{C}$ with $|S|, |T| \leq 10$ such that $(C, \bar{C})$ is the unique minimum cut separating $S$ from $T$. Thus, we can simply iterate over all such sets $S$ and $T$,
    contract $S$ and $T$ into single vertices, and compute the minimum $S$-$T$
    cut again using our $\CLP$ max-flow algorithm and the max-flow/min-cut theorem.
    Finally, we output the cut that is minimal with respect to the original weights. 
\end{proof}

\ifblind
\section*{Acknowledgements}
The first author thanks Danupon Nanongkai and Samir Datta for lengthy and
insightful discussions about bipartite matching and the isolation lemma.
The second author thanks Michal Kouck\'{y}, Ted Pyne, Sasha Sami, and Ninad Rajgopal
for early conversations about compression and the isolation lemma.
Both authors thank Samir Datta, Danupon Nanongkai, and Ted Pyne
for detailed comments on an earlier draft, as well as Ted Pyne and Roei Tell
for discussions on lossy coding.
\fi


\DeclareUrlCommand{\Doi}{\urlstyle{sf}}
\renewcommand{\path}[1]{\small\Doi{#1}}
\renewcommand{\url}[1]{\href{#1}{\small\Doi{#1}}}
\bibliographystyle{alphaurl}
\bibliography{bibliography}

\newcommand{\etalchar}[1]{$^{#1}$}
\begin{thebibliography}{CMW{\etalchar{+}}12}

\bibitem[AGT20]{AgarwalGurjarThierauf20}
Manindra Agrawal, Rohit Gurjar, and Thomas Thierauf.
\newblock Impossibility of derandomizing the isolation lemma for all families.
\newblock In {\em Electron. Colloquium Comput. Complex}, volume~27, page~98, 2020.

\bibitem[AM08]{ArvindMukhopadhyay08}
Vikraman Arvind and Partha Mukhopadhyay.
\newblock Derandomizing the isolation lemma and lower bounds for circuit size.
\newblock In {\em Approximation, Randomization and Combinatorial Optimization. Algorithms and Techniques: 11th International Workshop, APPROX 2008, and 12th International Workshop, RANDOM 2008, Boston, MA, USA, August 25-27, 2008. Proceedings}, pages 276--289. Springer, 2008.

\bibitem[ARZ99]{AllenderReinhardtZhou99}
Eric Allender, Klaus Reinhardt, and Shiyu Zhou.
\newblock Isolation, matching, and counting uniform and nonuniform upper bounds.
\newblock {\em Journal of Computer and System Sciences (J.CSS)}, 59(2):164--181, 1999.

\bibitem[AV19]{AnariVazirani19}
Nima Anari and Vijay~V Vazirani.
\newblock Matching is as easy as the decision problem, in the nc model.
\newblock {\em arXiv preprint arXiv:1901.10387}, 2019.

\bibitem[AV20]{AnariVazirani20}
Nima Anari and Vijay~V Vazirani.
\newblock Planar graph perfect matching is in nc.
\newblock {\em Journal of the ACM (JACM)}, 67(4):1--34, 2020.

\bibitem[BBRS98]{Barnes-etal}
Greg Barnes, Jonathan~F Buss, Walter~L Ruzzo, and Baruch Schieber.
\newblock A sublinear space, polynomial time algorithm for directed st connectivity.
\newblock {\em SIAM Journal on Computing}, 27(5):1273--1282, 1998.

\bibitem[BCK{\etalchar{+}}14]{BuhrmanCleveKouckyLoffSpeelman14}
Harry Buhrman, Richard Cleve, Michal Kouck{\'{y}}, Bruno Loff, and Florian Speelman.
\newblock Computing with a full memory: catalytic space.
\newblock In {\em ACM Symposium on Theory of Computing (STOC)}, pages 857--866, 2014.
\newblock \href {https://doi.org/10.1145/2591796.2591874} {\path{doi:10.1145/2591796.2591874}}.

\bibitem[BCW23]{BeidemanChandrasekaranWang23}
Calvin Beideman, Karthekeyan Chandrasekaran, and Weihang Wang.
\newblock Approximate minimum cuts and their enumeration.
\newblock In {\em Symposium on Simplicity in Algorithms (SOSA)}, pages 36--41. SIAM, 2023.

\bibitem[BDS22]{BisoyiDineshSarma22}
Sagar Bisoyi, Krishnamoorthy Dinesh, and Jayalal Sarma.
\newblock On pure space vs catalytic space.
\newblock {\em Theoretical Computer Science (TCS)}, 921:112--126, 2022.
\newblock \href {https://doi.org/10.1016/J.TCS.2022.04.005} {\path{doi:10.1016/J.TCS.2022.04.005}}.

\bibitem[Ber57]{Berge57}
Claude Berge.
\newblock Two theorems in graph theory.
\newblock {\em Proceedings of the National Academy of Sciences}, 43(9):842--844, 1957.
\newblock \href {https://doi.org/10.1073/pnas.43.9.842} {\path{doi:10.1073/pnas.43.9.842}}.

\bibitem[BKLS18]{BuhrmanKouckyLoffSpeelman18}
Harry Buhrman, Michal Kouck{\'{y}}, Bruno Loff, and Florian Speelman.
\newblock Catalytic space: Non-determinism and hierarchy.
\newblock {\em Theory of Computing Systems (TOCS)}, 62(1):116--135, 2018.
\newblock \href {https://doi.org/10.1007/S00224-017-9784-7} {\path{doi:10.1007/S00224-017-9784-7}}.

\bibitem[BTV09]{BourkeTewariVinodchandran09}
Chris Bourke, Raghunath Tewari, and NV~Vinodchandran.
\newblock Directed planar reachability is in unambiguous log-space.
\newblock {\em ACM Transactions on Computation Theory (TOCT)}, 1(1):1--17, 2009.

\bibitem[CLMP25]{CookLiMertzPyne25}
James Cook, Jiatu Li, Ian Mertz, and Edward Pyne.
\newblock The structure of catalytic space: Capturing randomness and time via compression.
\newblock In {\em ACM Symposium on Theory of Computing (STOC)}, 2025.

\bibitem[CLO{\etalchar{+}}23]{ChenLuOliveiraRenSanthanam23}
Lijie Chen, Zhenjian Lu, Igor~C. Oliveira, Hanlin Ren, and Rahul Santhanam.
\newblock Polynomial-time pseudodeterministic construction of primes.
\newblock In {\em 64th {IEEE} Annual Symposium on Foundations of Computer Science, {FOCS} 2023, Santa Cruz, CA, USA, November 6-9, 2023}, pages 1261--1270. {IEEE}, 2023.
\newblock \href {https://doi.org/10.1109/FOCS57990.2023.00074} {\path{doi:10.1109/FOCS57990.2023.00074}}.

\bibitem[CLRS22]{CLRS}
Thomas~H Cormen, Charles~E Leiserson, Ronald~L Rivest, and Clifford Stein.
\newblock {\em Introduction to algorithms}.
\newblock MIT press, 2022.

\bibitem[CM20]{CookMertz20}
James Cook and Ian Mertz.
\newblock Catalytic approaches to the tree evaluation problem.
\newblock In {\em ACM Symposium on Theory of Computing (STOC)}, pages 752--760. {ACM}, 2020.
\newblock \href {https://doi.org/10.1145/3357713.3384316} {\path{doi:10.1145/3357713.3384316}}.

\bibitem[CM21]{CookMertz21}
James Cook and Ian Mertz.
\newblock Encodings and the tree evaluation problem.
\newblock {\em Electronic Colloquium on Computational Complexity (ECCC)}, {TR21-054}, 2021.
\newblock URL: \url{https://eccc.weizmann.ac.il/report/2021/054}.

\bibitem[CM22]{CookMertz22}
James Cook and Ian Mertz.
\newblock Trading time and space in catalytic branching programs.
\newblock In {\em IEEE Conference on Computational Complexity (CCC)}, volume 234 of {\em Leibniz International Proceedings in Informatics (LIPIcs)}, pages 8:1--8:21, 2022.
\newblock \href {https://doi.org/10.4230/LIPIcs.CCC.2022.8} {\path{doi:10.4230/LIPIcs.CCC.2022.8}}.

\bibitem[CM24]{CookMertz24}
James Cook and Ian Mertz.
\newblock Tree evaluation is in space {O}(log n {\(\cdot\)} log log n).
\newblock In {\em ACM Symposium on Theory of Computing (STOC)}, pages 1268--1278. {ACM}, 2024.
\newblock \href {https://doi.org/10.1145/3618260.3649664} {\path{doi:10.1145/3618260.3649664}}.

\bibitem[CMW{\etalchar{+}}12]{CookMckenzieWehrBravermanSanthanam12}
Stephen~A. Cook, Pierre McKenzie, Dustin Wehr, Mark Braverman, and Rahul Santhanam.
\newblock Pebbles and branching programs for tree evaluation.
\newblock {\em ACM Transactions on Computational Theory (TOCT)}, 3(2):4:1--4:43, 2012.
\newblock \href {https://doi.org/10.1145/2077336.2077337} {\path{doi:10.1145/2077336.2077337}}.

\bibitem[CRS93]{ChariRohatiSrinivasan93}
Suresh Chari, Pankaj Rohatgi, and Aravind Srinivasan.
\newblock Randomness-optimal unique element isolation, with applications to perfect matching and related problems.
\newblock In {\em Proceedings of the twenty-fifth annual ACM symposium on Theory of Computing}, pages 458--467, 1993.

\bibitem[DGJ{\etalchar{+}}20]{DattaGuptaJainSharmaTewari20}
Samir Datta, Chetan Gupta, Rahul Jain, Vimal~Raj Sharma, and Raghunath Tewari.
\newblock Randomized and symmetric catalytic computation.
\newblock In {\em {CSR}}, volume 12159 of {\em Lecture Notes in Computer Science (LNCS)}, pages 211--223. Springer, 2020.
\newblock \href {https://doi.org/10.1007/978-3-030-50026-9\_15} {\path{doi:10.1007/978-3-030-50026-9\_15}}.

\bibitem[DKR10]{DattaKulkarniRoy10}
Samir Datta, Raghav Kulkarni, and Sambuddha Roy.
\newblock Deterministically isolating a perfect matching in bipartite planar graphs.
\newblock {\em Theory of Computing Systems}, 47(3):737--757, 2010.

\bibitem[DPT24]{DoronPyneTell24}
Dean Doron, Edward Pyne, and Roei Tell.
\newblock Opening up the distinguisher: {A} hardness to randomness approach for {BPL} = {L} that uses properties of {BPL}.
\newblock In {\em ACM Symposium on Theory of Computing (STOC)}, pages 2039--2049, 2024.

\bibitem[DPTW25]{DoronPyneTellWilliams25}
Dean Doron, Edward Pyne, Roei Tell, and Ryan Williams.
\newblock When connectivity is hard, random walks are easy with non-determinism.
\newblock In {\em ACM Symposium on Theory of Computing (STOC)}, 2025.

\bibitem[FGT16]{FennerGurjarThierauf16}
Stephen Fenner, Rohit Gurjar, and Thomas Thierauf.
\newblock Bipartite perfect matching is in quasi-nc.
\newblock In {\em Proceedings of the forty-eighth annual ACM symposium on Theory of Computing}, pages 754--763, 2016.

\bibitem[FMST25]{FolkertsmaMertzSpeelmanTupker25}
Marten Folkertsma, Ian Mertz, Florian Speelman, and Quinten Tupker.
\newblock Fully characterizing lossy catalytic computation.
\newblock In {\em Innovations in Theoretical Computer Science Conference (ITCS)}, volume 325 of {\em Leibniz International Proceedings in Informatics (LIPIcs)}, pages 50:1--50:13, 2025.

\bibitem[GG17]{GoldwasserGrossman17}
Shafi Goldwasser and Ofer Grossman.
\newblock Bipartite perfect matching in pseudo-deterministic nc.
\newblock 2017.

\bibitem[GJST19]{GuptaJainSharmaTewari19}
Chetan Gupta, Rahul Jain, Vimal~Raj Sharma, and Raghunath Tewari.
\newblock Unambiguous catalytic computation.
\newblock In {\em Conference on Foundations of Software Technology and Theoretical Computer Science (FSTTCS)}, volume 150 of {\em Leibniz International Proceedings in Informatics (LIPIcs)}, pages 16:1--16:13. Schloss Dagstuhl - Leibniz-Zentrum f{\"{u}}r Informatik, 2019.
\newblock \href {https://doi.org/10.4230/LIPIcs.FSTTCS.2019.16} {\path{doi:10.4230/LIPIcs.FSTTCS.2019.16}}.

\bibitem[GJST24]{GuptaJainSharmaTewari24}
Chetan Gupta, Rahul Jain, Vimal~Raj Sharma, and Raghunath Tewari.
\newblock Lossy catalytic computation.
\newblock {\em Computing Research Repository (CoRR)}, abs/2408.14670, 2024.

\bibitem[GT17]{GurjarThierauf17}
Rohit Gurjar and Thomas Thierauf.
\newblock Linear matroid intersection is in quasi-nc.
\newblock In {\em Proceedings of the 49th Annual ACM SIGACT Symposium on Theory of Computing}, pages 821--830, 2017.

\bibitem[GTV21]{GurjarThieraufVishnoi21}
Rohit Gurjar, Thomas Thierauf, and Nisheeth~K Vishnoi.
\newblock Isolating a vertex via lattices: Polytopes with totally unimodular faces.
\newblock {\em SIAM Journal on Computing}, 50(2):636--661, 2021.

\bibitem[HK73]{HopcroftKarp}
John~E. Hopcroft and Richard~M. Karp.
\newblock An $n^{5/2}$ algorithm for maximum matchings in bipartite graphs.
\newblock {\em SIAM Journal on Computing}, 2(4):225--231, 1973.
\newblock \href {https://arxiv.org/abs/https://doi.org/10.1137/0202019} {\path{arXiv:https://doi.org/10.1137/0202019}}, \href {https://doi.org/10.1137/0202019} {\path{doi:10.1137/0202019}}.

\bibitem[HMT06]{HoangMahajanThierauf06}
Thanh~Minh Hoang, Meena Mahajan, and Thomas Thierauf.
\newblock On the bipartite unique perfect matching problem.
\newblock In {\em International Colloquium on Automata, Languages, and Programming}, pages 453--464. Springer, 2006.

\bibitem[Kar86]{Karloff86}
Howard~J Karloff.
\newblock A las vegas rnc algorithm for maximum matching.
\newblock {\em Combinatorica}, 6(4):387--391, 1986.

\bibitem[KM94]{KargerMotwani94}
David~R Karger and Rajeev Motwani.
\newblock Derandomization through approximation: An nc algorithm for minimum cuts.
\newblock In {\em Proceedings of the twenty-sixth annual ACM symposium on Theory of Computing}, pages 497--506, 1994.

\bibitem[KMPS25]{KouckyMertzPyneSami25}
Michal Kouck\'{y}, Ian Mertz, Ted Pyne, and Sasha Sami.
\newblock Collapsing catalytic classes.
\newblock {\em Electronic Colloquium on Computational Complexity (ECCC)}, {TR25-018}, 2025.
\newblock URL: \url{https://eccc.weizmann.ac.il/report/2025/018}.

\bibitem[Kor22]{Korten22}
Oliver Korten.
\newblock Derandomization from time-space tradeoffs.
\newblock In {\em 37th Computational Complexity Conference (CCC 2022)}, pages 37--1. Schloss Dagstuhl--Leibniz-Zentrum f{\"u}r Informatik, 2022.

\bibitem[Kor25]{Korten25}
Oliver Korten.
\newblock Range avoidance and the complexity of explicit constructions.
\newblock {\em Bulletin of the EATCS (B.EATCS)}, 145:94--134, 2025.

\bibitem[Kou16]{Koucky16}
Michal Kouck{\'{y}}.
\newblock Catalytic computation.
\newblock {\em Bulletin of the EATCS (B.EATCS)}, 118, 2016.

\bibitem[KP24]{KortenPitassi24}
Oliver Korten and Toniann Pitassi.
\newblock Strong vs. weak range avoidance and the linear ordering principle.
\newblock In {\em 65th {IEEE} Annual Symposium on Foundations of Computer Science, {FOCS} 2024, Chicago, IL, USA, October 27-30, 2024}, pages 1388--1407. {IEEE}, 2024.
\newblock \href {https://doi.org/10.1109/FOCS61266.2024.00089} {\path{doi:10.1109/FOCS61266.2024.00089}}.

\bibitem[KR98]{KarpinskiRytter98}
Marek Karpi{\'n}ski and Wojciech Rytter.
\newblock {\em Fast parallel algorithms for graph matching problems}.
\newblock Number~9. Oxford University Press, 1998.

\bibitem[KS01]{KlivansSpielman01}
Adam~R Klivans and Daniel Spielman.
\newblock Randomness efficient identity testing of multivariate polynomials.
\newblock In {\em Proceedings of the thirty-third annual ACM symposium on Theory of computing}, pages 216--223, 2001.

\bibitem[KT16]{KalampallyTewari16}
Vivek Anand~T Kallampally and Raghunath Tewari.
\newblock Trading determinism for time in space bounded computations.
\newblock {\em arXiv preprint arXiv:1606.04649}, 2016.

\bibitem[Kuh55]{Kuhn55}
Harold~W Kuhn.
\newblock The hungarian method for the assignment problem.
\newblock {\em Naval research logistics quarterly}, 2(1-2):83--97, 1955.

\bibitem[KUW85]{KarpUpfalWigderson85}
Richard~M Karp, Eli Upfal, and Avi Wigderson.
\newblock Constructing a perfect matching is in random nc.
\newblock In {\em Proceedings of the seventeenth annual ACM symposium on Theory of computing}, pages 22--32, 1985.

\bibitem[Lov79]{Lovasz79}
L{\'a}szl{\'o} Lov{\'a}sz.
\newblock On determinants, matchings, and random algorithms.
\newblock In {\em FCT}, volume~79, pages 565--574, 1979.

\bibitem[LP15]{LingasPersson15}
Andrzej Lingas and Mia Persson.
\newblock A fast parallel algorithm for minimum-cost small integral flows.
\newblock {\em Algorithmica}, 72:607--619, 2015.

\bibitem[LPT24]{LiPyneTell24}
Jiatu Li, Edward Pyne, and Roei Tell.
\newblock Distinguishing, predicting, and certifying: On the long reach of partial notions of pseudorandomness.
\newblock In {\em IEEE Symposium on Foundations of Computer Science (FOCS)}, to appear, 2024.

\bibitem[Mad13]{Madry13}
Aleksander Madry.
\newblock Navigating central path with electrical flows: From flows to matchings, and back.
\newblock In {\em 2013 IEEE 54th Annual Symposium on Foundations of Computer Science}, pages 253--262. IEEE, 2013.

\bibitem[Mer23]{Mertz23}
Ian Mertz.
\newblock Reusing space: Techniques and open problems.
\newblock {\em Bulletin of the EATCS (B.EATCS)}, 141:57--106, 2023.

\bibitem[MN89]{MillerNaor89}
Gary~L Miller and Joseph Naor.
\newblock Flow in planar graphs with multiple sources and sinks.
\newblock In {\em FOCS}, pages 112--117, 1989.

\bibitem[MV97]{MahajanVinay97}
Meena Mahajan and V.~Vinay.
\newblock Determinant: Combinatorics, algorithms, and complexity.
\newblock {\em Chic. J. Theor. Comput. Sci.}, 1997, 1997.

\bibitem[MV00]{MahajanVaradarajan00}
Meena Mahajan and Kasturi~R Varadarajan.
\newblock A new nc-algorithm for finding a perfect matching in bipartite planar and small genus graphs.
\newblock In {\em Proceedings of the thirty-second annual ACM symposium on Theory of computing}, pages 351--357, 2000.

\bibitem[MVV87]{MulmuleyVaziraniVazirani87}
Ketan Mulmuley, Umesh~V Vazirani, and Vijay~V Vazirani.
\newblock Matching is as easy as matrix inversion.
\newblock In {\em Proceedings of the nineteenth annual ACM symposium on Theory of computing}, pages 345--354, 1987.

\bibitem[NSV94]{NarayananSaranVazirani94}
Hariharan Narayanan, Huzur Saran, and Vijay~V Vazirani.
\newblock Randomized parallel algorithms for matroid union and intersection, with applications to arborescences and edge-disjoint spanning trees.
\newblock {\em SIAM Journal on Computing}, 23(2):387--397, 1994.

\bibitem[OS93]{OrlinStein93}
James~B Orlin and Clifford Stein.
\newblock Parallel algorithms for the assignment and minimum-cost flow problems.
\newblock {\em Operations research letters}, 14(4):181--186, 1993.

\bibitem[PSW25]{PyneSheffieldWang25}
Edward Pyne, Nathan~S. Sheffield, and William Wang.
\newblock Catalytic communication.
\newblock In Raghu Meka, editor, {\em 16th Innovations in Theoretical Computer Science Conference, {ITCS} 2025, January 7-10, 2025, Columbia University, New York, NY, {USA}}, volume 325 of {\em LIPIcs}, pages 79:1--79:24. Schloss Dagstuhl - Leibniz-Zentrum f{\"{u}}r Informatik, 2025.
\newblock \href {https://doi.org/10.4230/LIPICS.ITCS.2025.79} {\path{doi:10.4230/LIPICS.ITCS.2025.79}}.

\bibitem[Pyn24]{Pyne24}
Edward Pyne.
\newblock Derandomizing logspace with a small shared hard drive.
\newblock In {\em IEEE Conference on Computational Complexity (CCC)}, volume 300 of {\em LIPIcs}, pages 4:1--4:20, 2024.

\bibitem[RA00]{AllenderReinhardt00}
Klaus Reinhardt and Eric Allender.
\newblock Making nondeterminism unambiguous.
\newblock {\em SIAM Journal on Computing}, 29(4):1118--1131, 2000.

\bibitem[Rei08]{Reingold08}
Omer Reingold.
\newblock Undirected connectivity in log-space.
\newblock {\em Journal of the ACM (J.ACM)}, 55(4):17:1--17:24, 2008.

\bibitem[ST17]{SvenssonTarnawski17}
Ola Svensson and Jakub Tarnawski.
\newblock The matching problem in general graphs is in quasi-nc.
\newblock In {\em 2017 IEEE 58th Annual Symposium on Foundations of Computer Science (FOCS)}, pages 696--707. Ieee, 2017.

\bibitem[VMP19]{MelkebeekPrakriya19}
Dieter Van~Melkebeek and Gautam Prakriya.
\newblock Derandomizing isolation in space-bounded settings.
\newblock {\em SIAM Journal on Computing}, 48(3):979--1021, 2019.

\bibitem[Wil25]{Williams25}
Ryan Williams.
\newblock Simulating time in square-root space.
\newblock In {\em ACM Symposium on Theory of Computing (STOC)}, 2025.

\end{thebibliography}






\end{document}